\newif\ifdraft
\definecolor{grey}{rgb}{0.8, 0.8, 0.8}
\definecolor{darkgrey}{rgb}{0.5, 0.5, 0.5}
\renewcommand{\hat}{\widehat}
\renewcommand{\tilde}{\widetilde}
\newcommand{\BT}{\begin{theorem}} \newcommand{\ET}{\end{theorem}}
\newcommand{\BPF}{\begin{proof}} \newcommand {\EPF}{\end{proof}}
\newcommand{\BPFOF}{\smallskip \begin{proof}} \newcommand {\EPFOF}{\end{proof}}
\newcommand{\BEQ}{\begin{equation}} \newcommand{\EEQ}{\end{equation}}
\newcommand{\BEQN}{\begin{eqnarray}}\newcommand{\EEQN}{\end{eqnarray}}
\newcommand{\BL}{\begin{lemma}} \newcommand{\EL}{\end{lemma}}
\newcommand{\BCM}{\begin{claim}} \newcommand{\ECM}{\end{claim}}
\newcommand{\BF}{\begin{fact}} \newcommand{\EF}{\end{fact}}
\newcommand{\algorithmicbreak}{\textbf{break}}
\newcommand{\BREAK}{\STATE \algorithmicbreak}
\DeclarePairedDelimiter{\ceil}{\lceil}{\rceil}
\newcommand{\norm}[1]{\left\lVert#1\right\rVert}
\newcommand{\tvdist}[1]{\left\lVert#1\right\rVert_{\mathsf{TV}}}
\newcommand{\naturals}{\mathbb{N}}
\newcommand{\vd}{\textbf{d}}
\newcommand{\etal}{{\it et~al.}\xspace}
\newcommand{\ie}{{\it i.e.,}\xspace}
\newcommand{\eg}{{\it e.g.,}\xspace}
\newcommand{\etc}{{\it etc.}\xspace}
\def\indicator{\mathds{1}}
\def\E{\mathbb{E}}
\renewcommand{\Pr}{\mathbb{P}}
\newcommand{\twonorm}[1]{\ensuremath{\norm{{#1}}_2}\xspace}
\newcommand{\tmix}{\ensuremath{t_{\rm mix}}\xspace}
\newlength\myindent\setlength\myindent{2em}
\def\th{^{th}}
\definecolor{darkred}{rgb}{0.5,0,0}
\definecolor{lightblue}{rgb}{0,0.4,0.8}
\definecolor{darkgreen}{rgb}{0,0.5,0}
\def\NewTheorem#1#2{%
  \newaliascnt{#1}{theorem}
  \newtheorem{#1}[#1]{#2}
  \aliascntresetthe{#1}
  \expandafter\def\csname #1autorefname\endcsname{#2}
}
 \newtheorem{theorem}{Theorem}[section]
\newtheorem{claim}[theorem]{Claim}
\theoremstyle{remark}
\newcommand\fund[1]{\ifx&#1& \else $#1$-\fi permanent\xspace}
\newcommand{\davg}{d_{\rm avg} }
\newcommand{\dvec}{{\bf d}}
\newcommand{\initquery}{\mathsf{init}}
\newcommand{\etype}{\mathsf{etype}}
\newcommand{\ein}{\mathsf{in}}
\newcommand{\eout}{\mathsf{out}}
\newcommand{\uOracle}{\mathcal{O}}
\newcommand{\dOracle}{\protect\overrightarrow{\mathcal{O}}}
\newcommand{\dOracleOne}{\protect\overrightarrow{\mathcal{O}}(1)}
\newcommand{\dOracleTwo}{\protect\overrightarrow{\mathcal{O}}(2)}
\newcommand{\rnull}{\mathsf{null}}
\renewcommand{\Pr}[1]{\mathbb{P}\left[\,#1\,\right]}
\renewcommand\E[1]{\mathbb{E}\left[\,#1\,\right]}
\renewcommand{\epsilon}{\varepsilon}
\title{ How large is your graph? 
%{\footnotesize    Regular Submission - eligible for best student paper award}
}
\author[1,2]{Varun Kanade\thanks{varunk@cs.ox.ac.uk}}
\author[3,4]{Frederik Mallmann-Trenn\thanks{frederik.mallmann-trenn@ens.fr
%~address:45 Rue d'Ulm 75004 Paris, France~phone:+33 1 44 32 33 28	
}}
\author[3,5]{Victor Verdugo\thanks{vverdugo@dii.uchile.cl}}
\affil[1]{Department of Computer Science, University of Oxford}
\affil[2]{The Alan Turing Institute}
\affil[3]{D\'{e}partment d'Informatique, \'Ecole normale sup\'erieure}
\affil[4]{Department of Computer Science, Simon Fraser University}
\affil[5]{Departmento de Ingenier\'ia Industrial, Universidad de Chile}
\begin{document}

\begin{titlepage}
\maketitle
\thispagestyle{empty}

\begin{abstract}
	We consider the problem of estimating the graph size, where one is given
	only local access to the graph.  We formally define a query model in which
	one starts with a \emph{seed} node and is allowed to make queries about
	neighbours of nodes that have already been seen.  In the case of undirected
	graphs, an estimator of Katzir~\etal (2014) based on a sample from the
	stationary distribution $\pi$ uses $O\left(\frac{1}{\twonorm{\pi}} +
	\davg\right)$ queries; we prove that this is tight. In addition, we
	establish this as a lower bound even when the algorithm is allowed to crawl
	the graph arbitrarily; the results of Katzir~\etal give an upper bound that
	is worse by a multiplicative factor $\tmix \cdot \log(n)$.
	
	The picture becomes significantly different in the case of directed graphs.
	We show that without  strong assumptions on the graph structure, the number
	of nodes cannot be predicted to within a constant multiplicative factor
	without using a number of queries that are at least linear in the number of
	nodes; in particular, rapid mixing and small diameter, properties that most
	real-world networks exhibit, do not suffice. The question of interest is
	whether any algorithm can beat breadth-first search. We introduce a new
	parameter, generalising the well-studied conductance, such that if a suitable
	bound on it exists and is known to the algorithm, the number of queries
	required is sublinear in the number of edges; we show that this is tight.
\end{abstract}

\end{titlepage}

\pagenumbering{arabic}

\section{Introduction}

Networks contain a wealth of information and studying properties of networks
may yield important insights. However, most networks of interest are very large
and ordinary users may have rather restricted access to them. One of most basic
questions about networks is the number of nodes contained in them. For example,
the number of pages on the world wide web (WWW) is estimated to be just shy of
50 billion at the time of writing.\footnote{\url{www.worldwidewebsize.com}}
Facebook currently reports having about one and three quarter billion
users;\footnote{Here, billion refers to one thousand million, not a million
million.} Twitter reports having about 300 million active users. It is
undesirable to rely on a small number of sources for such information. At times
we might be interested in more specific graphs for which there is no public
information available at all. Is there a way to estimate the total number of
nodes using rather limited access to these graphs?

Our model is motivated by the kind of access ordinary users may have to graphs of
interest. For example, most social network companies provide some sort of an
application programming interface (API). In the case of the world wide web, one
option is to simply crawl. The graph query access models we use are formally
defined in \autoref{sec:querymodel}. For now, we mention four specific
networks, each of which captures an important modelling aspect. First, Facebook
is an undirected graph of friendships, and as long as privacy settings allow
it, it is possible to request the number of friends for a given user and the
identity of the friends. Second, the world wide web, which is a directed
graph---it is possible to extract out-links on a given webpage; however, there
is no obvious method to access all in-links. The third is what we refer to as
the ``fan'' network---for a specific user it is possible to query who she is a
fan of, however in terms of her fans only the number is
revealed.\footnote{Although it is not obvious which of the extant networks has
this property, there are close approximations such as Blogger. In any case, it
is a very natural intermediate model between the second and the fourth.} And
finally, the fourth is the twitter network, which is directed, however both the
followers and followees of a given user are accessible, as far as privacy
settings allow. Obviously, the method to estimate the number of nodes may be
rather different in each case. It is worth mentioning that the twitter network
can essentially be treated as an undirected graph, however it still leaves open
the possibility that differentiating in-links and out-links leads to better
estimators. 

While memory and computational requirements do act as constraints when dealing
with large graphs, possibly the most important one is the rate limits set on
queries made using the API. Even when crawling the web, there is the risk of
simply being blocked if large volumes of requests are sent to the same server.
Thus, the most scarce resource in this instance is the number of queries made
about the graph. Our query model counts these costs strictly---essentially every
time a new node is discovered, its degree is revealed at unit cost and a unit
cost is incurred for every neighbour requested. 

\subsection{Related Work}

There is a large body of literature on estimating statistical properties of
graphs, reflecting the relevance of and interest in studying complex networks.
Much of this work is in the more applied literature and in particular, we were
unable to pin down a precisely defined query model. The work most closely
related to ours is that of Katzir~\etal~\cite{KLSC14}. They consider a model
where a random sample drawn from the stationary distribution of the random walk
is available. If the random walk  mixes rapidly and a suitable bound on the
mixing time is known, this can be simulated in the models we consider. They
show that $O\left(\frac{1}{\twonorm{\pi}} + \davg\right)$ queries suffice,
where $\pi$ is the stationary distribution and $\davg$ is the average degree.
We show that this is tight when given access to a sample from the stationary
distribution. When only neighbour queries are allowed, we show that this bound
is tight up to a multiplicative factor of the mixing time and other
polylogarithmic factors. It is worth mentioning what this bound actually yields
in graphs where the average degree is small, something common to most real
world networks. It is always the case that $\twonorm{\pi}^{-1} \leq \sqrt{n}$,
where equality holds in the case of regular graphs. Thus, the number of queries
required is significantly sublinear in the number of nodes.  For graphs with
power law degree distributions with parameter $\beta = 2$, Katzir~\etal
calculated that $\twonorm{\pi}^{-1} = O(n^{1/4} \log n)$.
See \autoref{sec:them} for further discussion. 

In more recent work, Hardiman and Katzir give another estimator based on
counting shared neighbours~\cite{HK:13}. They give a slightly better bound on
the number of nodes sampled than the earlier work of
Katzir~\etal~\cite{KLSC14}.  However, it is unclear that this can be
implemented efficiently in the query model in our paper.
Cooper~\etal~\cite{CRS12} obtain estimators for the number of edges, triangles
and nodes. The estimators are based on random walks over the graph, but in
particular, for estimating the number of nodes, the transition probabilities
are not inversely proportional to the degree. Their estimator relies on
counting the time required for the $k\th$ return to a particular vertex. It
seems unlikely that either their random walk or their estimator can be
implemented in a query efficient manner.  Musco~\etal~\cite{MSL16} develop a
distributed collision based approach where several random walks traverse the
graph and counts the number of collisions with other random walks.
Dasgupta~\etal~\cite{DKS14} provide an estimator of the average degree that
uses $O(\log U\log \log U)$ samples, where $U$ is an upper bound on the maximum
degree. Somewhat surprisingly, this seems to be an easier task than estimating
the number of nodes.

Cooper and Frieze~\cite{CF02} estimate the graph size in a dynamic setting
where the graph grows over time and there exists an agent that is allowed to
move through the network every time a new node arrives. They prove that this
agent visits asymptotically a constant fraction of the vertices. If one has
access to the complete neighbourhood of a node when it is visited, it is shown
by Mihail~\etal that the expected time in which a walk discovers a power law
random graph is sublinear~\cite{MST06}. There has been some older theoretical
work on estimating the size of graphs motivated primarily by search
algorithms~\cite{knuth74, marchetti88}. However the model of graph access is
unrelated and they do not present bounds on the estimates. 

Related to the question of estimating graph properties is that of testing
whether graphs have specific properties. Property testing has received much
attention in the last two decades with properties of undirected graphs being
one of the important areas of focus (see \eg~\cite{Gol:2010}). Directed graphs
have received somewhat less attention; there are two main query models,
\emph{unidirectional} where only out-neighbours may be queried, and
\emph{bidirectional} where both in and out-neighbours may be
queried~\cite{BR:2002}. Bender and Ron showed that there exist properties such
as strong connectivity, where the query complexity may be either $O(1)$ or
$\Omega(\sqrt{n})$ depending on the model used, even when allowing two-sided
error. More recently, Czumaj~\etal have shown that if a property can be tested
with constant query complexity in the bidirectional setting, it can be tested
with sublinear query complexity in the unidirectional model~\cite{CPS:2016}.
Although, these query models are closely related to the ones in this paper, a
crucial aspect exploited by most property testing algorithms is the ability to
sample nodes uniformly at random, something that is not available in our
setting.

A closely related line of work is that on estimating properties of
distributions from samples; of most interest, in our case is the support size.
This problem has a long history going back at least to Good and
Turing~\cite{Goo:1953}. We only mention a few recent relevant results here.
Given access to uniformly sampled elements from a set, $O(\sqrt{n})$ samples
suffice to derive a good estimate of the set size using the birthday
problem~\cite{Finkel98}.  However, it is not clear how to sample from the
uniform distribution over the nodes of the graph in the query model we
consider. Valiant and Valiant show that support size can be estimated to a good
accuracy using $O(n /\log n)$ samples for any distribution~\cite{VV11, VV13}.
However, their result requires that any element in the support has
$\Omega(1/n)$ probability mass, something that is not true for the stationary
distribution of a random walk on a graph. (In the case of directed graphs this
problem becomes even more severe, since some nodes may have exponentially small
stationary probability mass.)

\subsection{Our contributions}

Firstly, our contribution is to express the problem of estimating graph
properties formally. As discussed in the introduction, networks of interest
vary significantly in terms of what access might be easily available to an
ordinary user. Keeping in mind the examples of Facebook, the web, the
fan-network and Twitter, we introduce different types of oracles that provide
access to the graph. 

The focus of this work is on estimating the number of nodes. For any $\epsilon
> 0$ and $\delta > 0$, we say that an algorithm (with access to a query oracle)
provides an $\epsilon$-accurate estimate of the number of nodes, if with
probability at least $1 - \delta$ it outputs $\hat{n}$, such that $|n -
\hat{n}| \leq \epsilon n$. The main quantity to be optimised is the number of
oracle queries, though all algorithms considered in this paper are also
computationally highly efficient. Allowed queries are defined precisely in
\autoref{sec:querymodel}. Here, we point out that a unit cost must be paid for
every disclosed neighbour as well as to know the degree of a node. All
algorithms have access to the \emph{identifier} of one \emph{seed} node in the
graph to begin with. Throughout we will assume that $\epsilon$ and $\delta$ are
constants and the use of $O(\cdot)$ and $\Omega(\cdot)$ notation in this paper
hides all dependence on $\epsilon$ and $\delta$. \medskip

%\noindent {\bf Undirected Graphs}. We make some simple observations in the
%worst-case setting. It is clear that $O(m)$ queries suffice to explore the
%entire graph in a breadth-first manner. We show that without further
%assumptions on the graph $\Omega(m)$ are also required. A natural additional
%requirement in the context of real-world graphs is \emph{rapid
%mixing}.\footnote{We will not define rapid mixing formally, as depending on the
%context a bound of $\polylog(n)$ or $n^{\epsilon}$ on the mixing time may be
%the appropriate choice. All our results are stated in terms of the mixing time
%$\tmix$.}
\noindent {\bf Undirected Graphs}.
Katzir~\etal~\cite{KLSC14} implicitly assume the ability to sample from the
stationary distribution. They show that in this setting
$O\left(\frac{1}{\twonorm{\pi}} + \davg\right)$ samples from the stationary
distribution $\pi$ suffice, where $\davg$ is the average degree. If the graph
is connected and a suitable bound on the mixing time exists which is known to
the algorithm, $O(\tmix \log n)$ queries suffice to draw one node from a
distribution that is close (up to inverse polynomial factors in variation distance) to the stationary distribution using only neighbour
queries. This gives an upper bound of $O\left(\tmix \cdot \log n \cdot \left(
\frac{1}{\twonorm{\pi}} + \davg\right)\right)$ queries with a neighbour query
oracle (\autoref{cor:from-klsc}). In terms of lower bounds, we establish
that 

\begin{itemize}
	\item (\autoref{thm:lower-bound-undirected}) Any algorithm that has
		access to random samples from the \emph{stationary distribution} $\pi$
		and outputs a $0.1$-accurate estimate of the number of nodes with
		probability at least $0.99$, requires $\Omega\left( \frac{1}{\twonorm{\pi}} + \davg
		\right)$ samples.
	\item (\autoref{thm:lboqmodel}) Any algorithm that has access to
		neighbour queries and outputs a $0.1$-accurate estimate of the number of
		nodes with probability at least $0.99$, requires $\Omega\left(\frac{1}{\twonorm{\pi}} + \davg \right)$ queries.
\end{itemize}

We remark that there is a gap between the upper bound and lower bound when
considering an oracle with neighbour access. A question left open by our work
is whether the multiplicative factor of $\tmix \log n$ is required, or whether
a more efficient estimator can be designed. \medskip 

\noindent {\bf Directed Graphs}. The estimator of Katzir~\etal is not
applicable in the setting when graphs are directed, unless the query model
allows in-neighbour queries as well as out-neighbour queries, in which case all
results in the undirected setting hold. The reason for this is that even if one
did receive a sample drawn from the stationary distribution, it is no longer
the case that $\pi_v \propto \deg(v)$, a crucial property exploited by the
estimator of Katzir~\etal. In fact, unless strong assumptions are made on the
relationship between the in-degree and the out-degree (\eg being Eulerian), no
simple expression for $\pi$ in terms of degrees exists.\footnote{We mention that
the distribution is closely related to the PageRank distribution. However, the
PageRank random walk jumps to a uniformly random node in the graph with a small
probability; this is done to avoid problems when the graph is not strongly
connected.}

We provide constructions of graphs that demonstrate that low average degree,
rapid mixing and small diameter are not sufficient to design algorithms to
estimate the graph size with sublinear (in $n$) query complexity. In fact, we
show that when only given access to a sample from the stationary distribution,
a superpolynomially large sample may be required even for graphs with constant
average degree, logarithmic diameter and rapid mixing. The reason for this is
that in the case of directed graphs most of the stationary mass can be
concentrated on a very small number of nodes.

In order to understand the query complexity when neighbour-queries are
allowed, we define a new parameter called \emph{generalized conductance}.
Roughly speaking a graph has $\epsilon$-general conductance $\phi_\epsilon$, if
every set containing at most $(1 - \epsilon) n$ nodes has at least
$\phi_\epsilon$ fraction of directed edges going out. If a suitable bound on
the value of $\phi_\epsilon$ is known, we show that a simple edge-sampling
algorithm outputs an estimate to within relative error $\epsilon$ while using
$O(n/\phi_{\epsilon})$ queries (\autoref{thm:upper}) and that this is
almost tight, in the sense that any algorithm that outputs an estimate that is
even slightly better than $\epsilon$ requires at least
$\Omega(n/\phi_{\epsilon})$ queries (\autoref{thm:philower}). The algorithm
only requires access to out-neighbours, while the lower bound holds even with
respect to an oracle that allows in-neighbour queries, which means that it also
applies in the case of undirected graphs.

\subsection{Discussion}

In terms of improvements to our results, the most interesting question is
whether any reasonable subclass of directed graphs are amenable to
significantly improved query complexity (ideally sublinear) for the problem of
estimating the number of nodes. The constructions in \autoref{sec:comet}
show that having low average degree, small diameter and rapid mixing is not
enough.  For undirected graphs, it is an interesting question whether the extra
factor of mixing time $\tmix$ must be paid, when only neighbour queries are
allowed.  It is conceivable that an improved estimator that can handle
correlated pairs of nodes can be designed, so as to not waste all but one query
for every $\tmix$ queries.  Finally, it'd be interesting to study the question
of estimating other properties of graphs, number of edges, number of triangles,
\etc in this framework.

The model choices we made reflect the publicly available access to most extant
networks; in particular, we were very stringent with accounting---every
neighbour query counts as unit cost.  Many APIs return the list of neighbours,
although in chunks of a fixed size, \eg 100. It is hard to argue that 100
should be treated as constant in the context of social networks.  Nevertheless,
if we wanted a list of all followers of Barack Obama, this would still result
in a very large number of queries. A natural extension to the query model in
this paper is to allow the entire neighbourhood (possibly restricted to
the out-neighbourhood in the case of directed graphs) to be  revealed at unit cost.
Estimators such as the one involving common neighbours of Hardiman and
Katzir~\cite{HK:13} can be implemented efficiently under such a model.
Understanding the query complexity of estimation in these more powerful models
is an interesting question.

\section{Model and Preliminaries}\label{sec:model}

{\it Graphs.} Graphs $G = (V, E)$ considered in this paper may be directed or
undirected; typically we assume $|V| = n$ and $|E| = m$, though if there is
scope for confusion we use $|V|$ or $|E|$ explicitly. For undirected graphs,
for a node $v \in V$, we denote by $N(v)$ its \emph{neighbourhood}, \ie $N(v) =
\{ w ~|~ \{v, w \} \in E \}$, and its degree by $\deg(v) := |N(v)|$. In the
case of directed graphs, we denote $N^+(v) := \{ w ~|~ (v, w) \in E \}$ its
\emph{out-neighbourhood} and by $\deg^+(v) := |N^+(v)|$ its \emph{out-degree}.
Similarly, $N^-(v) := \{ u ~|~ (u, v) \in E \}$ denotes its
\emph{in-neighbourhood} and $\deg^-(v) := |N^-(v)|$ its \emph{in-degree}.
Furthermore, $\davg$ denotes the average degree, i.e, $\sum_ {v\in
V}\deg(v)/n$.  Whenever there is scope for confusion, we use the notations
$\deg_G(u)$, $N_G(v)$, $\davg(G)$, \etc to emphasise that the terms are with
respect to graph $G$. \medskip

\noindent {\it Random walks in graphs.} A discrete-time lazy random walk
$(X_t)_{t\ge 0}$ on a graph $G=(V,E)$ is defined by a Markov chain with state
space $V$ and transition matrix $P=(p(u,v))_{u,v\in V}$ defined as follows: 
 For every $u \in V$, $p(u,u) = 1/2$ ({\it Laziness}).
In the undirected setting, for every $v \in N(u)$, $p(u,v) = 1/(2
		\deg(u))$. In the directed setting, for every $v \in N^+(u)$, $p(u,v) =
		1/(2 \deg^+(u))$. 
The transition probabilities can be expressed in matrix form as $P=(I+
\mathcal{D}^{-1}A)/2$, where $A$ is the adjacency matrix of $G$, $\mathcal{D}$
is the diagonal matrix of node degrees (only out-degrees if $G$ is directed),
and $I$ is the identity. Let $p^t(u, \cdot)$ denote the distribution over nodes
of a random walk at time step $t$ with $X_0 = u$. For the most part, we will
consider (strongly) connected graphs. Together with laziness, this ensures that
the stationary distribution of the random walk, denoted by $\pi$, is unique and
given by $\pi P=\pi$. In the undirected case, the form of the stationary
distribution is particularly simple, $\pi(u) = \deg(u)/(2|E|)$; furthermore, the
random walk is reversible, \ie $\pi(u)  p(u,v) = \pi(v) p(v,u)$. As before,
$\pi_G$ is used to emphasise that the stationary distribution is respect to
graph $G$. \medskip

\noindent {\it Mixing time.} To measure how far $p^t(u,\cdot)$ is from the
stationary distribution we consider the {\it total variation distance}; for
distributions $\mu,\nu$  over sample space $\Omega$ the total variation
distance is $\tvdist{\mu-\nu}=\frac{1}{2}\displaystyle\sum_{x\in \Omega}|\mu(x)-\nu(x)|$.  The
\emph{mixing time} of the random walk is defined as
$
	\tmix := \max_{u \in V} \min \{ t \geq 1 ~|~ \tvdist{p^t(u, \cdot) - \pi} \leq e^{-1} \}.
$
Although the choice of $e^{-1}$ is arbitrary, it is known that after
$\tmix\log(1/\varepsilon)$ steps, the total variation distance is at most
$\epsilon$.

\subsection{Query Model}\label{sec:querymodel}

In this section, we formally define the query model that allows us to access
the graph.  We consider four different neighbour query oracles, $\uOracle$,
$\dOracle$, $\dOracleOne$ and $\dOracleTwo$. We also assume that all oracles
have graphs stored as adjacency lists. In the case of directed graphs, there
are two adjacency lists for every vertex, one for in-neighbours and one for
out-neighbours. No assumption is made regarding the order in which the
adjacency lists are stored. 

All oracles also make use of labelling functions\footnote{We do assume that
there is unit cost in sending the label of a node to the algorithm, thus
implicitly we may think of every label having a bit-representation that is
logarithmic in the size of the graph. However, the bit-length of the label
reveals minimal information, which we assume that the algorithm has access to
anyway.} ; for some space $L$, a labelling function $\ell : V \rightarrow L$
used by an oracle is an injection.  We allow $\ell$ to be defined dynamically.
The labelling function we use throughout the paper is the \emph{consecutive
labelling function} defined as follows: The label set is $L = \naturals$. If
$S$ denotes all the vertices labelled by the oracle so far, for a new vertex $v
\not\in S$ picked by the oracle, the label is assigned as follows: if $S =
\emptyset$, $\ell(v) = 1$, else $\ell(v) = \max \{ \ell(u) ~|~ u \in S \} + 1$. 
For these neighbour query oracles, any algorithm can essentially make two types
of queries, (i) $\initquery$, and (ii) $(l, i)$ for undirected graphs or  $(l,
i, \etype)$ for directed graphs, where $\etype$ is either $\ein$, $\eout$. We
assume that oracles use a labelling function $\ell$.
\smallskip
\begin{flushleft}
	(i) $\initquery$: The oracle initialises a set $S := \{ v \}$, where $v$ is
	chosen to be an arbitrary node in the graph. The different oracles respond as
	defined below.
	\begin{itemize}
		\item $\uOracle$ responds with $(\ell(v), \deg(v))$ for some $v \in V$.
		 $\dOracle$ responds with $(\ell(v), \deg^+(v))$ for some $v \in V$.
		\item $\dOracleOne$ and $\dOracleTwo$ both respond with $(\ell(v), \deg^+(v),
			\deg^-(v))$.
	\end{itemize}
	(ii) $(l, i)$ or $(l, i, \etype)$: $\uOracle$ only responds to query of type $(l,
	i)$ and the remaining to queries of the type $(l, i, \etype)$.
	\begin{itemize}
		\item For query $(l, i)$, if there is $v \in S$ such that $\ell(v) = l$
			and $i \leq \deg(v)$, then $\uOracle$ returns $(\ell(u), \deg(u))$,
			where $u$ is the $i\th$ element in the adjacency list of $v$. The
			oracle updates $S \leftarrow S \cup \{ u \}$. Otherwise it returns
			$\rnull$. 
		\item For query $(l, i, \eout)$, if there is $v \in S$ such that $\ell(v)
			= l$ and $i \leq \deg^+(v)$, then $\dOracle$ returns $(\ell(u),
			\deg^+(u))$, while both $\dOracleOne$ and $\dOracle(2)$ return
			$(\ell(u), \deg^+(u), \deg^-(u))$ where $u$ is the $i\th$ element on
			the out-neighbour adjacency list of $v$. The oracles all update $S
			\leftarrow S \cup \{ u\}$. Otherwise, all oracles return $\rnull$. 
		\item For query $(l, i, \ein)$, $\dOracle$ and $\dOracleOne$ always
			return $\rnull$. If there exists $v \in S$, such that $\ell(v) = l$
			and $i \leq \deg^-(v)$, then $\dOracleTwo$ returns $(\ell(u),
			\deg^+(u), \deg^-(u))$, where $u$ is the $i\th$ element on the
			adjacency list of in-neighbours of $v$; otherwise, it returns
			$\rnull$. In the case of $\dOracleTwo$, it updates $S \leftarrow S
			\cup \{ u \}$.
	\end{itemize}
\end{flushleft}

In words, $\uOracle$ captures access to undirected graphs such as social
networks like Facebook, $\dOracle$ captures directed graphs such as the world
wide web, where only out-edges are available, $\dOracleOne$ captures directed
graphs such as fan-networks, where the in-degree but not in-neighbours may be
available, and $\dOracleTwo$ captures directed graphs such as Twitter where
access is available to both in-edges and out-edges.

It is worth pointing out that a response of $\rnull$ provides no new
information to the algorithm. The algorithm only knows that it hadn't received
$l$ as a label before or that the degree of the node with label $l$ is strictly
smaller than $i$, things it already knew.  This is because the oracle maintains
a history of past queries; if this were not the case the algorithm could
generate (random) labels and try to find out whether they corresponded to nodes
in the graph. However, even for oracles that don't maintain such state
explicitly, by choosing $\ell$ to be collision-resistant hash function,
essentially the same behaviour can be achieved.

In some of our proofs, we also allow oracles to return side information.  These
are denoted by a superscript $s$, \eg $\uOracle^s$. Access to oracles with
(truthful) side information can only help to reduce query complexity, since an
algorithm can choose to ignore any side information it receives. Finally, we
say that an algorithm is \emph{sensible} if it does not make a query to which
it already knows the answer. It is clear that given any algorithm, there is a
sensible algorithm that is at least as good as the original algorithm. The
\emph{sensible} algorithm merely simulates the original algorithm and whenever
the original algorithm made a query to which the answer was known, the
\emph{sensible} algorithm simulates the oracle response.

Finally, we consider the \emph{stationary query oracle}, $\uOracle^\pi$, which
when queried returns $(\ell(v), \deg(v))$, where $v \sim \pi$; $\pi$ is the
stationary distribution.

\section{Undirected graphs}\label{sec:undirected}

In this section, we focus on the question of estimating the number of nodes in
undirected graphs. We show that the results obtained by
Katzir~\etal~\cite{KLSC14} are essentially tight, up to a factor of the
mixing time and polylogarithmic terms in $n$. This suggests that rapid
mixing is a critical condition for being able to estimate the size of the
graph. We begin by discussing very simple examples which show that this is
indeed the case.

\paragraph{Some Simple Observations} 

First we construct two simple graphs, one having twice as many nodes as the
other. No algorithm can distinguish between these two graphs with probability
greater than $3/4$ (say) unless $\Omega(m)$ queries to the oracle $\uOracle$
are made. The graphs are shown in Figure~\ref{fig:mixing-required}. The first
graph is simply $G_{n, p}$ with each node being connected to an additional node
with degree $1$. The second graph is identical, except that for one randomly
chosen node, instead of this additional edge to a degree $1$ node, it is
connected to a copy of $G_{n, p}$. Suppose $p > \log(n)/n$ so that the graphs
are connected with high probability and that $m = \Theta (n^2p)$. 

We don't provide a formal proof here, which can be given along the lines of the
proof of \autoref{thm:lower-bound-undirected}. We outline a sketch here.
Assuming the oracle $\uOracle$ returns some $v \in V$ in the part that is
$G_{n, p}$ with an additional edge coming out of each node, how long will it
take before we discover the edge to the copy of $G_{n, p}$, if it exists? Since
we made no assumption on the order in which neighbours are stored in an
adjacency list, we have to query at least a constant fraction of neighbours of a
given node to discover, with constant probability, the edge that leads out of
the $G_{n, p}$. However, most of these will just lead us to a single vertex.
Thus, unless we explore at least a constant fraction of the edges of a constant
fraction of the vertices, we don't stand a significant chance of finding the
edge that leads to the copy of the $G_{n, p}$. Thus, $\Omega(m)$ queries to the
oracle are required. 

Clearly, if we're in the second case, that is the graph with an additional copy
of $G_{n, p}$ connected to the earlier one by a single edge, a random walk on
this graph does not mix rapidly. What this example shows is that if one wishes
to do asymptotically better than breadth-first search, one needs additional
assumptions on the graph, such as rapid mixing.

\begin{figure}
\centering
\begin{minipage}{.5\textwidth}
\begin{center}
\includegraphics[scale=0.4]{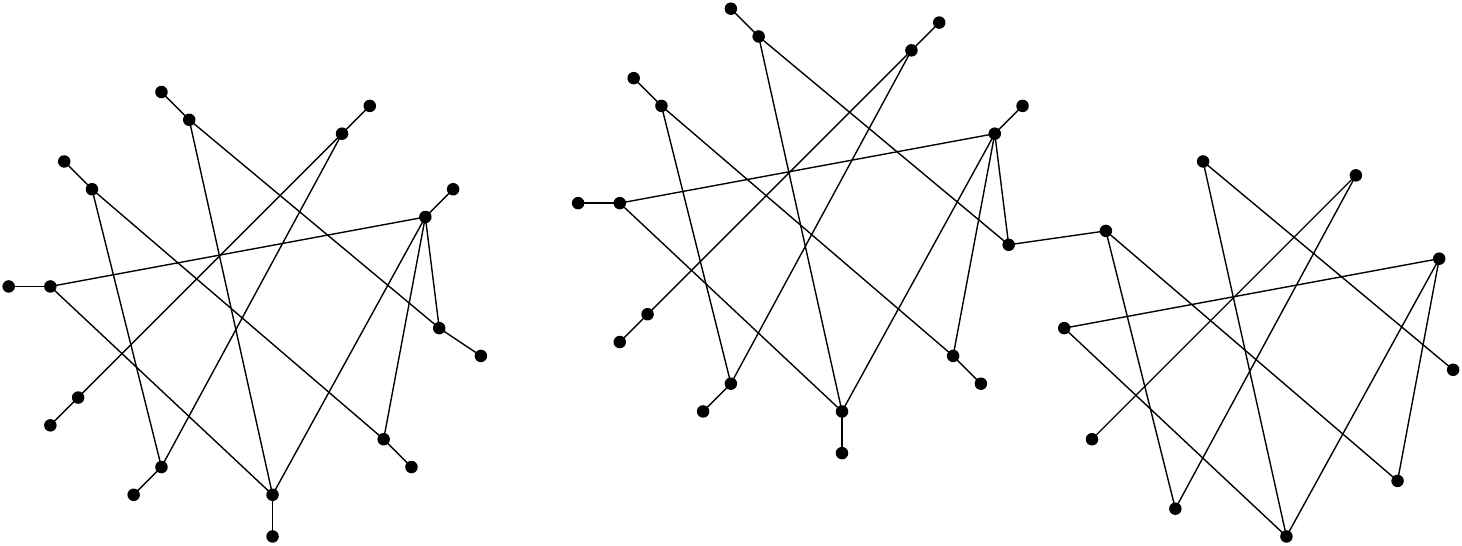}	
\end{center}
	\caption{Example showing that $\Omega(m)$ samples might be required if no assumption like rapid mixing is made. \label{fig:mixing-required}}
\end{minipage}%
\begin{minipage}{.5\textwidth}
\begin{center}
	\includegraphics[scale=0.4]{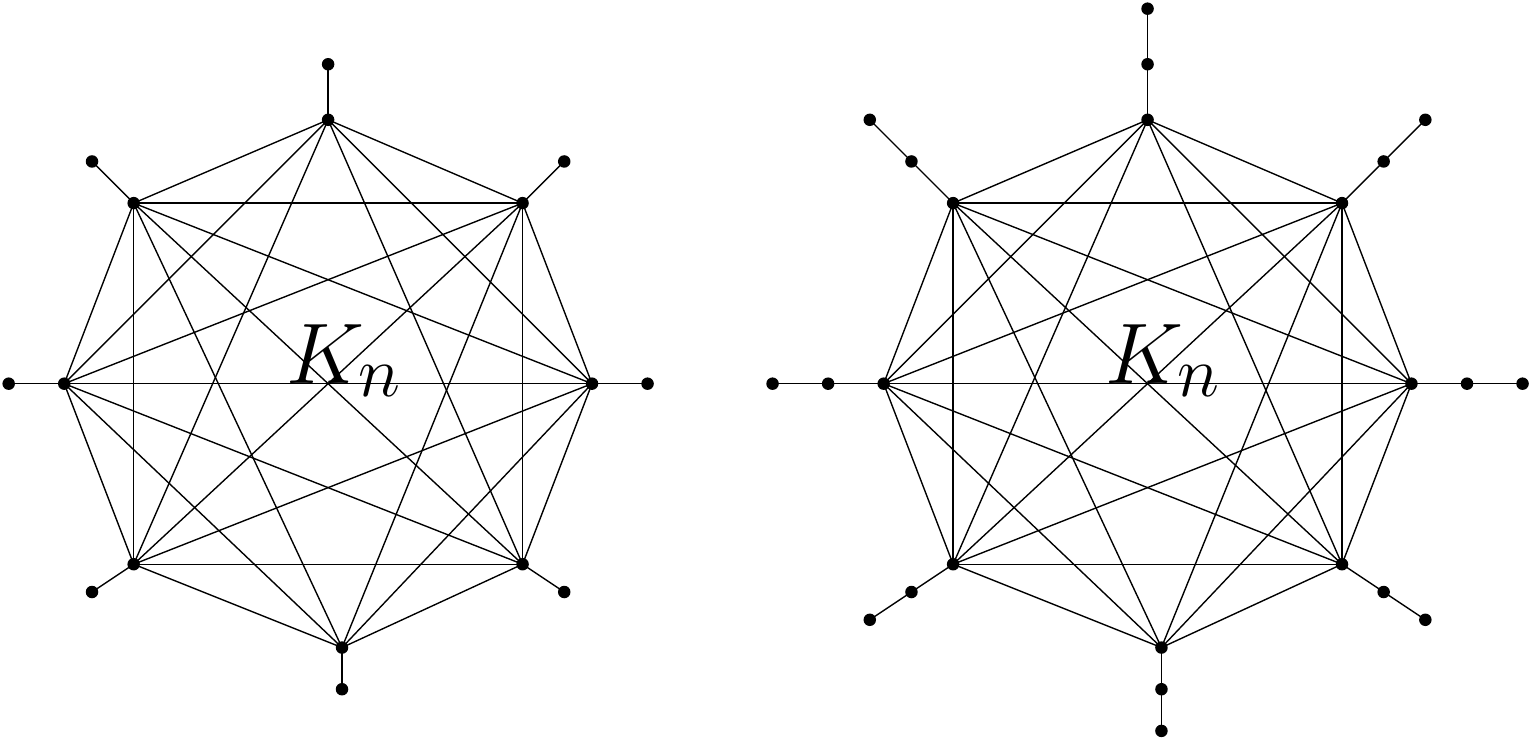}
\caption{ The \emph{sun} graph and the \emph{bright sun}.}
\label{fig:suns}
	
\end{center}
\end{minipage}
\end{figure}

\subsection{Results of Katzir~\etal}\label{sec:them}%Oracle Sampling from the Stationary Distribution}

In this section, we discuss the result of Katzir~\etal~\cite{KLSC14} regarding
estimating the number of nodes in a graph. Though, they don't discuss this
formally, their method essentially boils down to having access to the
stationary query oracle $\uOracle^\pi$. 

For the sake of completeness, we outline the estimator of Katzir~\etal here.
Let \[X = \{(\ell(x_1), \deg(x_1)), (\ell(x_2), \deg(x_2)), \allowbreak  \ldots, (\ell(x_r),
\deg(x_r)) \}\] be drawn from the stationary query oracle $\uOracle^\pi$.  Let
$\Psi_{1} = \sum_{i = 1}^r \deg(x_i)$ and $\Psi_{-1} = \sum_{i = 1}^r
1/\deg(x_i)$; let $C = \sum_{i \neq j} \indicator(\ell(x_i) = \ell(x_j))$ denote the random
variable counting the number of collisions. Then, it is fairly straightforward
to see that $\E{\Psi_1 \Psi_{-1}} = r + 2 n {r \choose 2} \twonorm{\pi}^2$ and
$\E{C} = 2 {r \choose 2} \twonorm{\pi}^2$. Katzir~\etal use the the following
as an estimator for the number of nodes: 
$
	\hat{n} := \frac{\Psi_{1} \Psi_{-1} - r}{C}
$
They prove the following result.

\begin{theorem}[Katzir et al. \cite{KLSC14}] \label{thm:klsc}
	Let $\varepsilon>0$ and $\delta>0$. Suppose that the number of samples is
	$r\ge 1+\frac{32}{\varepsilon^2 \delta} \max\left \{
		\frac{1}{\|\pi\|_2}, \davg\right\}$, where $\davg =
	2m/n$. Then, $\Pr{|\hat{n} -n|\ge \varepsilon n}\le \delta.$
\end{theorem}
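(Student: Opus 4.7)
The plan is three-fold: (i) verify the two expectation identities stated in the excerpt, which directly give $n = \E{\Psi_1\Psi_{-1}-r}/\E{C}$; (ii) bound the variance of the centered statistic $Z := \Psi_1\Psi_{-1}-r - nC$ and of $C$ itself so each concentrates multiplicatively around its mean; and (iii) combine via Chebyshev's inequality and a ratio argument.

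For step (i), using $\pi(u) = \deg(u)/(2m)$ and the independence of the samples $x_1, \ldots, x_r \sim \pi$, for $i \neq j$:
\[
\E{\deg(x_i)/\deg(x_j)} = \Bigl(\sum_u \pi(u)\deg(u)\Bigr)\Bigl(\sum_v \pi(v)/\deg(v)\Bigr) = \Bigl(\sum_u \frac{\deg(u)^2}{2m}\Bigr) \cdot \frac{n}{2m} = n\twonorm{\pi}^2,
\]
and $\Pr{x_i = x_j} = \twonorm{\pi}^2$. Summing over the $r(r-1)$ ordered off-diagonal pairs and adding the $r$ diagonal terms of $\Psi_1\Psi_{-1}$ reproduces both identities, and in particular $\E{Z}=0$.

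For steps (ii)--(iii), since $\hat n - n = Z/C$, it suffices to show that, with probability $1-\delta$, both $|Z| \leq (\varepsilon/2)n\,\E{C}$ and $C \geq \E{C}/2$ hold. Both $Z = \sum_{i\neq j} Y_{ij}$ with $Y_{ij} = \deg(x_i)/\deg(x_j) - n\,\indicator(x_i=x_j)$, and $C = \sum_{i\neq j}\indicator(x_i=x_j)$, are quadratic forms in i.i.d.\ samples, so their variances decompose into covariances indexed by the overlap pattern of index pairs $(\{i,j\},\{k,l\})$. Disjoint index sets give zero covariance by independence, leaving $O(r^3)$ single-overlap and $O(r^2)$ double-overlap contributions. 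A case analysis for $Z$ reveals a crucial simplification: among the four single-overlap subcases, three cancel identically (using the stationary-mass identity $\sum_u \pi(u)^2/\deg(u) = 1/(2m)$ together with $\sum_u \pi(u)^2 \deg(u) = \sum_u \deg(u)^3/(2m)^2$), leaving only pairs sharing the same denominator index, whose leading surviving contribution is $\mathbb{E}_\pi[\deg(x)]^2 \cdot \mathbb{E}_\pi[1/\deg(x)^2]$. Bounding this and the double-overlap moment using $\mathbb{E}_\pi[\deg(x)^k] = \sum_u \deg(u)^{k+1}/(2m)$ and $\mathbb{E}_\pi[1/\deg(x)] = n/(2m)$ yields $\mathrm{Var}(Z)/(n\E{C})^2 = O(\davg/r + 1/(r\twonorm{\pi}))$ and $\mathrm{Var}(C)/\E{C}^2 = O(1/(r\twonorm{\pi}))$. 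Taking $r \geq 32\max\{1/\twonorm{\pi},\davg\}/(\varepsilon^2\delta)$ drives each relative variance below $\varepsilon^2\delta/16$; Chebyshev applied to $Z$ and $C$ together with a union bound closes the argument.

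The main obstacle is the variance bound on $Z$: a priori, $\deg(x_i)/\deg(x_j)$ is heavy-tailed because $\pi$ is biased toward large-degree vertices while $\deg(x_j)$ in the denominator can be as small as $1$, so the naive expansion of $\mathrm{Var}(\Psi_1\Psi_{-1})$ contains apparently large terms such as $\mathbb{E}_\pi[\deg(x)^2] = \sum_u \deg(u)^3/(2m)$, which can be as large as $\Theta(n^2)$ in graphs like a star. What saves the argument is the centering by $-n\,\indicator(x_i = x_j)$: it induces cancellations among three of the four single-overlap configurations that eliminate exactly the problematic heavy-tailed contributions, leaving only moments controllable by $\davg$. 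This algebraic cancellation is the heart of the proof and is the source of the $\davg$ term in the sample complexity; the $1/\twonorm{\pi}$ term arises, more standardly, from the variance of the denominator $C$, which is a sum of collision indicators.
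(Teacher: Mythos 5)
This theorem is quoted in the paper with attribution to Katzir \etal and is not proved there (the text reads ``They prove the following result''), so there is no in-paper proof to compare against; I can only assess your argument on its own terms, and it holds up. The expectation identities in step (i) are correct, the ratio decomposition $\hat n - n = Z/C$ with $Z=\Psi_1\Psi_{-1}-r-nC$ is the right way to set up the two-event Chebyshev argument, and the cancellation you highlight is genuine: writing $Y_{ij}=\deg(x_i)/\deg(x_j)-n\,\indicator(x_i=x_j)$ and using $\sum_u\pi(u)^2/\deg(u)=1/(2m)$ and $\sum_u\pi(u)^2\deg(u)=\sum_u\deg(u)^3/(2m)^2$, the single-overlap covariances for shared numerator index and for the two ``crossed'' overlaps vanish exactly, and only the shared-denominator case survives, contributing $O(r^3)\cdot\bigl(\E{\deg}^2\E{1/\deg^2}+n^2\sum_u\pi(u)^3\bigr)$; normalising by $(n\E{C})^2\asymp n^2r^4\twonorm{\pi}^4$ and using $\E{\deg}=2m\twonorm{\pi}^2$, $\E{1/\deg^2}\le n/(2m)$ and $\sum_u\pi(u)^3\le\twonorm{\pi}^3$ does give $O(\davg/r+1/(r\twonorm{\pi}))$ as you claim, with the double-overlap terms of lower order. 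Two points you should make explicit in a full write-up: the event $C\ge\E{C}/2$ is also what guarantees $\hat n$ is well defined (otherwise one divides by zero), and the stated constant $32/(\varepsilon^2\delta)$ requires tracking the constants in the Chebyshev and union-bound steps rather than the $O(\cdot)$ form you give; as written your argument establishes the theorem up to the value of that constant, which is all one can reasonably ask of a sketch.
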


Given a bound $T$ on $\tmix$ and access to oracle $\uOracle$, the stationary
query oracle $\uOracle^{\pi}$ can be approximately simulated. (We assume that
the graph is connected.) We simply perform a random walk on the graph for $T
\log(1/\rho)$ steps, if a sample from a distribution at most $\rho$ far from
$\pi$ in total variation distance is desired.  Using the above theorem, we get
the following straightforward corollary.  The proof of the corollary follows by
simulating $s$ random walks for  $O(T\log s)$ time steps (queries) ensuring
that each random walk has a total variation distance of $s^{-2}$ from $\pi$.
Using the above theorem, we get the following straightforward corollary. 

\begin{corollary}\label{cor:from-klsc} 
	Let $\varepsilon > 0$ and $\delta > 0$.  For a connected, undirected graph,
	$G = (V, E)$ let $T$ be such that $\tmix \leq T$. Then there exists an
	algorithm that given $T$ and access to oracle $\uOracle$, outputs $\hat{n}$,
	such that, 
	$ \Pr{|\hat{n} - n| \geq \varepsilon n} \leq \delta. $
	Furthermore, the number of queries made by the algorithm to $\uOracle$ is
	$O\left(Ts\log s \right)$, where
	$s=O\left(\frac{1}{\epsilon^2 \delta} \cdot \max\left\{ \frac{ 1}
	{\twonorm{\pi}}, \davg \right\}\right)$.
\end{corollary}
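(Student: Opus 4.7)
The plan is to reduce the corollary to Theorem~\ref{thm:klsc} by simulating the stationary query oracle $\uOracle^\pi$ with only neighbour queries to $\uOracle$. First I would invoke Theorem~\ref{thm:klsc} to fix the sample size $s = \Theta\!\left(\epsilon^{-2}\delta^{-1}\max\{1/\twonorm{\pi},\davg\}\right)$ so that the estimator $\hat n = (\Psi_1\Psi_{-1}-r)/C$ applied to $s$ genuine i.i.d.\ samples from $\pi$ is $\epsilon$-accurate with failure probability at most $\delta/2$.

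Next I would generate each of the $s$ samples by running an independent lazy random walk from the seed node for $\tau = \lceil T \log(2 s^{2}/\delta)\rceil = O(T\log s)$ steps. Since $T \geq \tmix$, the iterated mixing bound recalled in Section~\ref{sec:model} gives $\tvdist{p^{\tau}(u,\cdot)-\pi}\leq s^{-2}\cdot (\delta/2)$ for the endpoint of each walk. Each step of a walk is implementable using one neighbour query to $\uOracle$: at the current vertex $v$ the algorithm already knows $\deg(v)$ from when $v$ was first discovered, so it samples $i\in[\deg(v)]$ uniformly, issues $(\ell(v),i)$, and moves to the returned neighbour (lazy steps are decided by a local coin flip and cost nothing). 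Hence the total number of queries to $\uOracle$ is $O(s\cdot T\log s)$.

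Finally, a standard TV coupling argument takes care of the error introduced by approximate mixing. The joint law of the $s$ simulated endpoints lies within $s \cdot \delta/(2s^{2}) = \delta/(2s)$ total-variation distance from $s$ truly i.i.d.\ samples from $\pi$; hence by the coupling lemma, the estimator computed on the simulated samples agrees, with probability at least $1-\delta/(2s)$, with one that would be computed on genuine samples. A union bound with the $\delta/2$ failure probability from Theorem~\ref{thm:klsc} gives overall failure probability at most $\delta$, yielding the claim.

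The argument is essentially bookkeeping; there is no substantial obstacle. The only thing to be careful about is choosing the per-walk TV slack ($s^{-2}$ rather than a cruder $s^{-1}$) so that the union bound over the $s$ walks still leaves enough room below $\delta$, and noting that each random-walk step truly costs a single $\uOracle$ query under the adjacency-list access model of Section~\ref{sec:querymodel}.
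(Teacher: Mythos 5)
Your proposal is correct and follows essentially the same route as the paper: the paper's (very brief) argument is precisely to simulate $s$ independent random walks for $O(T\log s)$ steps each so that every endpoint is within $O(s^{-2})$ total-variation distance of $\pi$, and then invoke Theorem~\ref{thm:klsc} together with a coupling/union bound. Your write-up just makes the bookkeeping (the per-walk TV slack, the subadditivity of TV over the product, and the one-query-per-step accounting) explicit.
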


The key question is, are these bounds tight? In
\autoref{sec:stat-lower-bounds}, we give much stronger lower bounds, where
we show that for any valid degree sequence $\vd = (d_1, \ldots, d_n)$ on $n$
nodes, there exists a sequence $\vd^\prime$ on $2n$ nodes, such that any
algorithm with access to a stationary oracle for either sequence $\vd$ or
$\vd^\prime$, cannot distinguish between the two unless it makes
$\Omega\left(\frac{1}{\twonorm{\pi}}+ \davg \right)$ queries. Here, we discuss
simple instances where the bound in \autoref{thm:klsc} is tight; we don't
give formal proofs which are relatively straightforward.

\begin{itemize}
	\item Let $G_1$ and $G_2$ be $d$-regular graphs on $n$ and $2n$ nodes
		respectively, for $d < n$. Thus, samples from the stationary distribution
		are essentially just uniformly chosen random nodes in the graph. As the
		degrees are identical, they reveal no additional information. Thus, the
		algorithm has to query until a collision is observed, which requires
		$\Omega(\sqrt{n})$ queries. Clearly for these graphs
		$\frac{1}{\twonorm{\pi}} = \Theta(\sqrt{n})$.
	\item We define the \emph{sun} graph as a $K_n$ with an additional edge out
		of each vertex to a node with degree $1$. The \emph{bright sun} graph is
		the same as the sun graph, except that there is a path of length $2$
		coming out of each vertex, rather than just an edge (see Figure \ref{fig:suns}). Under the stationary
		oracle model, $\Omega(n)$ queries are required before any degree $1$ or
		$2$ nodes are returned since the total probability on these nodes is
		$O(1/n)$. As can be seen for these graphs $\davg = \Theta(n)$.

\end{itemize}

The above examples show that there are graphs for which the bound of
Katzir~\etal is tight. In this paper, we show a significantly stronger
statement---given any degree sequence $\vd$, there are graphs for which the
bound is almost tight.

\subsection{Lower Bounds in the Stationary Query Model}
\label{sec:stat-lower-bounds}

In this section, we show that for any undirected, connected graph $G = (V, E)$,
there exists an undirected, connected graph $\tilde{G}$, such that any
algorithm which has access to either the oracle $\uOracle^{\pi}(G)$ or
$\uOracle^{\pi}(\tilde{G})$, cannot distinguish between the two with
significant probability without making a large number of queries. Thus, it
cannot output an estimate $\hat{n}$ satisfying $|\hat{n} - |V|| < |V|/2$ with probability $2/3$.
%The proof of the following statements can be found in the appendix.
\begin{lemma} \label{lem:perro}
	Let $G=(V, E)$ be an undirected, connected graph with $|V| = n$ and $|E|
	\geq n$. Then there exists a graph $\tilde{G}=(\tilde{V}, \tilde{E})$, with
	$|\tilde{V}| = 2n$, such that any algorithm given access to either
	$\uOracle^{\pi}(G)$ or $\uOracle^{\pi}(\tilde{G})$ with equal probability,
	cannot distinguish between the two with probability greater than
	$\frac{2}{3}$, unless it makes at least
	$\Omega\left(\frac{1}{\twonorm{\pi_G}}\right)$ queries. As a consequence, no
	algorithm can output $\hat{n}$ satisfying $|\hat{n} - n^*| < n^*/2$ w.p. at
	least $2/3$, where $n^* = n$ if the chosen graph is $G$ and $2n$ if it is
	$\tilde{G}$.
\end{lemma}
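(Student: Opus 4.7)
The plan is to construct $\tilde{G}$ as two (near-)disjoint copies of $G$ and then show, via a birthday-style collision argument, that the oracle's response distributions on $G$ and $\tilde{G}$ are close in total variation unless $\Omega(1/\twonorm{\pi_G})$ queries are made. First I would define $\tilde G$: take two vertex-disjoint copies $G^{(1)}, G^{(2)}$ of $G$ and add a single bridge edge between chosen vertices $u^{(1)}, u^{(2)}$. Then $\tilde G$ is connected on $2n$ vertices with $2\lvert E\rvert +1$ edges, and for every $v \notin \{u^{(1)}, u^{(2)}\}$ we have $\deg_{\tilde G}(v^{(i)}) = \deg_G(v)$, whence
\[
\pi_{\tilde G}(v^{(i)}) \;=\; \frac{\deg_G(v)}{2(2|E|+1)} \;=\; \frac{|E|}{2|E|+1}\cdot \frac{\pi_G(v)}{2},
\]
so each vertex of $\tilde G$ carries essentially half the stationary mass of its twin in $G$, with error $O(1/|E|)$ concentrated at the bridge endpoints (which is why the hypothesis $|E|\ge n$ is convenient).

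Second, I would argue that the only graph-specific information the stationary oracle can leak is the collision pattern. Under the consecutive labelling convention, fresh vertices are labelled $1,2,3,\ldots$ in both worlds, so labels themselves carry no signal; the marginal distribution of the reported degree is essentially identical in $G$ and $\tilde G$ (it differs in total variation by $O(1/|E|)$ on account of the bridge). Hence I would couple the $r$ samples so that their degrees agree exactly and the labels agree whenever no collision occurs in either experiment. The distribution of the response sequence conditional on ``no collision in either world'' is thus identical in the two experiments.

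Third, I would bound the collision probability by the standard birthday calculation. For i.i.d. samples from any distribution $\mu$, the probability of a collision among $r$ draws is at most $\binom{r}{2}\twonorm{\mu}^2$. Applied to $\mu = \pi_G$ and $\mu = \pi_{\tilde G}$, and using $\twonorm{\pi_{\tilde G}}^2 \le \tfrac{1}{2}\twonorm{\pi_G}^2(1+o(1))$, this gives
\[
\tvdist{Q_G^{(r)} - Q_{\tilde G}^{(r)}} \;=\; O\!\left(r^2 \twonorm{\pi_G}^2 \;+\; r/|E|\right),
\]
where $Q_G^{(r)}, Q_{\tilde G}^{(r)}$ are the respective response distributions. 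For any (possibly adaptive, randomised) algorithm, the total variation between its output distributions on $G$ and $\tilde G$ is bounded by the same quantity; hence distinguishing with advantage bounded away from $0$ requires $r = \Omega(1/\twonorm{\pi_G})$.

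Finally, for the consequence about estimation, I would observe that a multiplicative estimator with the stated guarantee yields a distinguisher: thresholding $\hat n$ at (say) $3n/2$ separates the two regimes with probability at least $2/3$ in each world, which a TV gap of $\Omega(1)$ forbids below the claimed query complexity. The main obstacle is keeping the marginal degree distributions aligned despite the bridge edge; this is handled by absorbing the $O(1/|E|)$ perturbation into the TV bound using the assumption $|E|\ge n$, which makes $r/|E|$ dominated by $r^2\twonorm{\pi_G}^2$ throughout the relevant range of $r$.
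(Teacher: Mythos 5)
Your overall strategy (two copies of $G$ plus a birthday/collision coupling) is the same as the paper's, but your construction of $\tilde G$ has a genuine gap. With a single bridge edge $\{u^{(1)},u^{(2)}\}$, the degrees of the bridge endpoints change from $\deg_G(u)$ to $\deg_G(u)+1$, and the oracle reports degrees. The probability of sampling a bridge endpoint under $\pi_{\tilde G}$ is about $(\deg_G(u)+1)/(2|E|)$, and on that event the reported degree cannot be matched by the coupled sample in $G$ (there need not even exist a vertex of degree $\deg_G(u)+1$ in $G$). So the per-sample total variation between the response distributions is $\Theta(\deg_G(u)/|E|)$, not $O(1/|E|)$ as you claim. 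This is fatal for a bad choice of $u$: take $G$ to be a star with one extra edge (so $|E|=n$) and attach the bridge at the hub; then a single stationary sample reports degree $n-1$ with probability about $1/2$ in $G$ and degree $n$ with probability about $1/2$ in $\tilde G$, and one query distinguishes the graphs with constant advantage. The claim ``the marginal distribution of the reported degree differs in total variation by $O(1/|E|)$'' is therefore false without further care.

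The gap is fixable in two ways. You could insist that $u$ be a minimum-degree vertex, so $\deg_G(u)\le 2|E|/n$ and the perturbation is $O(1/n)$ per sample, hence $O(r/n)=o(1)$ since $r\le 1/\twonorm{\pi_G}\le\sqrt{n}$. The paper instead sidesteps the issue entirely: since $|E|\ge n$, $G$ contains an edge $\{i,j\}$ whose removal does not disconnect it, and one removes $\{i,j\}$ and its copy $\{i',j'\}$ and adds $\{i,j'\}$ and $\{i',j\}$. This two-edge swap preserves every degree and keeps $|\tilde E|=2|E|$ exactly, so $\pi_{\tilde G}(v^{(a)})=\pi_G(v)/2$ with no error term, the coupling ($\tilde X_t$ is a uniformly random copy of $X_t$) is exact, and the only failure mode is a collision among the $X_t$, bounded by $\binom{s}{2}\twonorm{\pi_G}^2$ via Markov. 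Note also that the hypothesis $|E|\ge n$ is used there to guarantee the existence of such a swap, not merely to control an error term. Finally, your thresholding distinguisher at $3n/2$ does not quite work as stated, since the accuracy intervals $(n/2,3n/2)$ and $(n,3n)$ overlap; this looseness is present in the paper's statement as well, but your test would misclassify an output in $(n,3n/2]$ coming from $\tilde G$.
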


\begin{proof}%[Proof of \autoref{lem:perro}]
	We construct $\tilde{G} = (\tilde V, \tilde E)$ by taking $G =(V, E)$ and an
	identical copy, denoted $G^\prime = (V^\prime, E^\prime)$, and connecting
	them arbitrarily in such a way that the degrees of the nodes remain
	unchanged. We note that this can always be done. Since $|E| \geq n$, the graph $G$ is
	not a tree, and hence there is an edge $\{i, j\} \in E$ such that removing
	it does not disconnect $G$. Let $\{i^\prime, j^\prime\}$ be the
	corresponding edge in the copy of $G$, then removing $\{i, j\}$ and
	$\{i^\prime, j^\prime\}$ and adding the edges $\{i, j^\prime\}$ and
	$\{i^\prime, j\}$ gives $\tilde{G}$.

	For some $s \in \naturals$, let $(X_t)_{t=1}^s$ be $s$ independent samples
	drawn from $\pi_G$. Similarly, let $(\tilde{X}_t)_{t=1}^s$ be independent
	samples from $\pi_{\tilde{G}}$. We define a coupling between $(X_t)_{t=1}^s$
	and $(\tilde{X}_t)_{t=1}^s$. Suppose $X_t = i$, then we define $\tilde{X}_t
	= i$ with probability $\tfrac{1}{2}$ and $\tilde{X}_t = i^\prime$ (the copy
	of $i$ in $G^\prime$) with probability $\tfrac{1}{2}$. This ensures that
	$\tilde{X}_t$ is drawn from the distribution $\pi_{\tilde{G}}$. Note that the
	output of the oracle $\uOracle^\pi(G)$ (respectively
	$\uOracle^\pi(\tilde{G})$) is $(\ell(X_t), \deg_G(X_t))$ (respectively
	$(\ell(\tilde{X}_t), \deg_{\tilde{G}}(\tilde{X}_t))$).

	We assume the oracle simply uses a consecutive labelling scheme, \ie when
	$X_t$ is drawn from $\pi$, if there is no $t^\prime < t$, such that
	$X_{t^\prime} = X_t$, then $\ell(X_t) = \max \{ \ell(X_{t^\prime}) ~|~
	t^\prime < t \} + 1$. Note that this ensures that unless there exist $t \neq
	t^\prime$ and $t, t^\prime \leq s$, such that $X_{t} = X_{t^\prime}$, the outputs
	of oracles $\uOracle^\pi(G)$ and $\uOracle^\pi(\tilde{G})$ would be
	identical.

	Finally, define  
	\[ C_s = \sum_{t' < t \leq s} \indicator(X_t
	\neq \tilde{X}_{t^\prime}) \]
	The expectation of $C_s$ is $\E{C_s} = {s \choose 2} \twonorm{\pi_{G}}^2$. Then by Markov's inequality, for
	a suitably chosen constant $c$, if $s \leq \frac{c}{\twonorm{\pi_G}}$,
	$\Pr{C \geq 1} \leq 1/6$. Thus, with probability at least $\frac{5}{6}$, the
	outputs of oracles $\uOracle^{\pi}(G)$ and $\uOracle^{\pi}(\tilde{G})$ can
	be coupled perfectly.
\end{proof}

\begin{lemma} \label{lem:perra}
	Let $G=(V, E)$ be an undirected, connected graph with $|V| = n$ and $|E|
	\geq n$. Then there exists a graph $\tilde{G}=(\tilde{V}, \tilde{E})$, with
	$|\tilde{V}| = 2n$, such that any algorithm given access to either
	$\uOracle^{\pi}(G)$ or $\uOracle^{\pi}(\tilde{G})$ with equal probability,
	cannot distinguish between the two with probability greater than
	$\frac{2}{3}$, unless it makes at least $\Omega\left(\davg\right(G))$
	queries. As a consequence, no algorithm can output $\hat{n}$ satisfying
	$|\hat{n} - n^*| < n^*/2$ w.p.  at least $2/3$, where $n^* = n$ if the
	oracle chosen corresponds to $G$ and $n^* = 2n$ if it corresponds to
	$\tilde{G}$.
\end{lemma}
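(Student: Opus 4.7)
The plan is to mimic Lemma~\ref{lem:perro} but construct $\tilde G$ so that the ``extra'' $n$ vertices of $\tilde G$ together carry only $\Theta(1/\davg(G))$ of the stationary mass, so that they are essentially never sampled in fewer than $\Omega(\davg(G))$ queries. First I would pick any edge $\{i,j\} \in E$ (one exists because $|E| \ge n \ge 1$) and subdivide it with $n$ fresh vertices $v_1,\dots,v_n$, replacing $\{i,j\}$ by the path $i - v_1 - v_2 - \cdots - v_n - j$. This yields $\tilde G$ with $|\tilde V|=2n$ and $|\tilde E|=m+n$, and $\tilde G$ is connected because the new path reconnects $i$ and $j$. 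Crucially, $\deg_{\tilde G}(u)=\deg_G(u)$ for every $u \in V$ and $\deg_{\tilde G}(v_k)=2$ for each new vertex, so $\pi_{\tilde G}(u)=\deg_G(u)/(2(m+n))$ on $V$ and the total $\pi_{\tilde G}$-mass on $\{v_1,\dots,v_n\}$ is $n/(m+n)=\Theta(1/\davg(G))$.

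Next I would set up a Bernoulli-switch coupling between i.i.d.\ samples $(X_t)_{t=1}^s$ from $\pi_G$ and $(\tilde X_t)_{t=1}^s$ from $\pi_{\tilde G}$. Independently for each $t$, draw $B_t \sim \Bernoulli(n/(m+n))$; if $B_t = 0$, draw a fresh $Z_t \sim \pi_G$ and set $X_t = \tilde X_t = Z_t$; if $B_t = 1$, draw $X_t \sim \pi_G$ and $\tilde X_t$ uniformly from $\{v_1,\dots,v_n\}$, independently. A short calculation verifies the correct marginals. When $B_t = 0$ for all $t \le s$, the two sampled vertex sequences coincide; because the degrees on $V$ are preserved and the consecutive labelling is determined only by first-occurrence order, the oracle response sequences $(\ell(X_t), \deg_G(X_t))$ and $(\ell(\tilde X_t), \deg_{\tilde G}(\tilde X_t))$ are then identical as sequences, so the algorithm's view is the same in both worlds.

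The conclusion then follows from a union bound: $\Pr{\exists t \le s: B_t = 1} \le s\,n/(m+n) \le 2s/\davg(G)$, so for $s \le c\,\davg(G)$ with a small constant $c$ the two oracles are perfectly coupled with probability at least $5/6$. By the same argument used at the end of Lemma~\ref{lem:perro}, this forces any distinguisher's success probability below $2/3$ and rules out any estimator $\hat n$ with $|\hat n - n^*| < n^*/2$, giving the claimed $\Omega(\davg(G))$ lower bound.

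The step I expect to be the main subtlety, rather than a real obstacle, is checking that the dynamic, consecutive labelling really leaks no information on the coupled event: I need to argue that when $B_t = 0$ throughout, the ordered list of $(\text{label},\text{degree})$ pairs presented by $\uOracle^\pi(G)$ and $\uOracle^\pi(\tilde G)$ is literally identical as a sequence of symbols, so even a sensible adversary exploiting label order or repeated-label collision patterns gains no advantage over chance.
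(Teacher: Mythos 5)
Your proof is correct and uses the same core argument as the paper's: a Bernoulli-switch coupling between i.i.d.\ samples from $\pi_G$ and $\pi_{\tilde G}$ in which the switch probability equals the $\pi_{\tilde G}$-mass of the newly added vertices, shown to be $O(1/\davg(G))$, followed by a union/Markov bound over $s$ samples. The only real difference is the gadget: the paper attaches a $3$-regular expander on $n$ nodes via a degree-preserving edge swap, whereas you subdivide one edge into a path of $n$ degree-$2$ vertices. For this lemma both gadgets are equally valid --- all that matters is that the original degrees are preserved, that $\pi_{\tilde G}$ conditioned on landing in $V$ equals $\pi_G$, and that the new part carries $O(1/\davg(G))$ stationary mass --- and your marginal checks and the observation that identical vertex sequences force identical consecutive-label/degree transcripts are sound. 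One downstream caveat worth noting: \autoref{thm:lboqmodel} reuses this construction under the hypothesis $\min_i \tilde d_i \ge 3$ (needed for whp connectivity of the configuration-model graph), which the paper's expander gadget satisfies but your degree-$2$ subdivision vertices would not; this does not affect the correctness of the lemma itself.
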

\begin{proof}%[Proof of \autoref{lem:perra}]
	We construct $\tilde{G} = (\tilde V, \tilde E)$ by taking $G =(V, E)$ and a
	$3$-regular expander on $n$ nodes, denoted $G^\prime = (V^\prime,
	E^\prime)$, and connecting them arbitrarily in such a way that the degrees
	of the nodes remain unchanged. (If $n$ is odd, we can use set $|V^\prime| =
	|V| + 1$.) We note that this can always be done. Since $|E| \geq n$, the
	graph $G$ is not a tree, and hence there is an edge $\{i, j\} \in E$ such
	that removing it does not disconnect $G$. Let $\{i^\prime, j^\prime\}$ be
	the corresponding edge in the copy of $G$, then removing $\{i, j\}$ and
	$\{i^\prime, j^\prime\}$ and adding the edges $\{i, j^\prime\}$ and
	$\{i^\prime, j\}$ gives $\tilde{G}$.

	For $s \in \naturals$, let $(X_t)_{t = 1}^s$ be $s$ independent samples
	drawn from $\pi_G$; similarly $(\tilde{X}_t)_{t = 1}^s$ a sample drawn from
	$\pi_{\tilde{G}}$. We define a coupling between $(X_t)_{t =1}^s$ and
	$(\tilde{X}_t)_{t=1}^s$ as follows. Let $p = \pi_{\tilde{G}}(V^\prime)$, the
	probability that a node sampled according to $\pi_{\tilde{G}}$ is one of the
	newly created ones. Let $Z_t  = 1$ with probability $p$ and $0$ otherwise.
	Then we have $\tilde{X}_t = X_t$ if $Z_t = 0$, otherwise $\tilde{X}_t$ is chosen
	uniformly at random from $V^\prime$, the newly added nodes. To see that this
	is a valid coupling, observe that for any $u \sim \pi_{\tilde{G}}$,
	conditioned on $u \not \in V^\prime$, $u$ is  distributed according
	to $\pi_G$.

	Note that the oracle $\uOracle^{\pi}(G)$ (respectively
	$\uOracle^{\pi}(\tilde{G})$) returns $(\ell(X_t), \deg_G(X_t))$ (respectively
	$(\ell(\tilde{X}_t), \deg_{\tilde{G}}(X_t))$). We assume that the oracles use
	a consecutive labelling function, \ie every new node is given the smallest
	as yet unused natural number.

	Let $C = \sum_{t = 1}^s Z_t$, then $\E{C} = ps$. Note that $p = 3
	|V^\prime|/(2|E| + 3|V|^\prime) \leq 3|V|/(2|E|) = 3/\davg(G)$. Thus, for a
	suitably chosen constant $c$, if $s \leq c \cdot \davg(G)$, by Markov's inequality
	$\Pr{C \geq 1} \leq 1/6$. Note that conditioned on $C = 0$, the outputs of
	the oracles $\uOracle^\pi(G)$ and $\uOracle^\pi(\tilde{G})$ can be coupled
	perfectly. Hence, no algorithm can distinguish between $G$ and $\tilde{G}$
	using fewer than $c\cdot  \davg(G)$ queries with probability greater than or equal
	to $2/3$.											\end{proof}

As a consequence of \autoref{lem:perro} and \autoref{lem:perra}, we get the
following theorem.

\begin{theorem}\label{thm:lower-bound-undirected}
	Given an undirected, connected graph $G=(V, E)$, there exist graphs
	$\tilde{G}$, $\bar{G}$ with $2|V|$ nodes and a constant $p < 1$, such that
	any algorithm that is given access to one of three oracles
	$\uOracle^\pi(G)$, $\uOracle^\pi(\tilde{G})$ and $\uOracle^\pi(\bar{G})$,
	chosen with equal probability, requires
	$\Omega\left(\frac{1}{\twonorm{\pi_G}} + \davg (G)\right)$ queries to
	distinguish between them with probability at least $p$. As a consequence,
	any algorithm that outputs $\hat{n}$, such that $|\hat{n} - n^* | < n^*/2 $
	requires at least $\Omega\left( \frac{1}{\twonorm{\pi_G}} + \davg(G)
	\right)$ queries, where $n^* = n$ if the graph is $G$ and $n^* = 2n$ if the
	graph is either $\tilde{G}$ or $\bar{G}$.
\end{theorem}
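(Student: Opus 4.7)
The plan is to derive the theorem directly from Lemma~\ref{lem:perro} and Lemma~\ref{lem:perra} by lifting their pairwise lower bounds to a three-way distinguishing problem, and then to obtain the estimator bound as a standard consequence of distinguishability.

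First I would take $\tilde{G}$ from Lemma~\ref{lem:perro} and $\bar{G}$ from Lemma~\ref{lem:perra}, both on $2|V|$ nodes. Each lemma says that distinguishing $G$ from the respective graph under the stationary-query oracle with probability above $2/3$ requires $\Omega(1/\twonorm{\pi_G})$ or $\Omega(\davg(G))$ queries, respectively. To lift this to the three-oracle setting I would fix $p$ close enough to $1$ (for instance $p = 17/18$) and consider any algorithm $\mathcal{A}$ using $q$ queries that correctly identifies the oracle among $\uOracle^\pi(G)$, $\uOracle^\pi(\tilde{G})$, $\uOracle^\pi(\bar{G})$ with probability at least $p$ under the uniform choice of oracle. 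Let $a,b,c \in [0,1]$ denote the three conditional success probabilities given each oracle, so $a+b+c = 3p$. Since each lies in $[0,1]$, each is at least $3p-2 > 2/3$. I would then reduce to Lemma~\ref{lem:perro} by running $\mathcal{A}$ and outputting ``$G$'' exactly when it does: the resulting two-oracle distinguisher succeeds with probability $(a+b)/2 > 2/3$, so Lemma~\ref{lem:perro} forces $q = \Omega(1/\twonorm{\pi_G})$. The symmetric reduction invoking Lemma~\ref{lem:perra} with $\bar{G}$ in place of $\tilde{G}$ yields $q = \Omega(\davg(G))$. Since the maximum of these two quantities is within a factor of two of their sum, the claimed query bound $\Omega(1/\twonorm{\pi_G} + \davg(G))$ follows.

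For the estimator consequence, I would note that an algorithm outputting $\hat{n}$ with $|\hat{n} - n^*| < n^*/2$ with probability at least $p$ yields a three-way distinguisher: after a constant number of independent repetitions and taking the median, the relative error can be reduced strictly below $1/3$ with probability still close to $p$, at the cost of only a multiplicative constant in the query count. At that accuracy the confidence intervals for $n^* = n$ and $n^* = 2n$ become disjoint, so thresholding $\hat{n}$ (and, on the large side, using any consistent tie-breaking rule between $\tilde{G}$ and $\bar{G}$) identifies the correct oracle with probability close to $p$, and the preceding lower bound on $q$ transfers immediately.

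The principal obstacle I anticipate is keeping the success probabilities aligned through both reductions while staying strictly above the $2/3$ threshold required by the two lemmas. This is controlled by choosing $p$ sufficiently close to $1$ in the three-way argument and by the standard median-amplification trick for the estimator reduction; in both cases the losses are constant factors absorbed by the $\Omega(\cdot)$ notation, so no ideas beyond Lemmas~\ref{lem:perro} and~\ref{lem:perra} are required.
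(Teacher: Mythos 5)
Your three-way distinguishing argument is correct and is exactly what the paper intends: the paper offers no explicit proof of this theorem, stating only that it follows from Lemma~\ref{lem:perro} and Lemma~\ref{lem:perra}, and your accounting of the conditional success probabilities ($a+b+c\ge 3p$, hence each at least $3p-2>2/3$ for $p=17/18$, then collapsing to a two-way distinguisher whose success probability is at least $\tfrac12 a+\tfrac12 b$) makes that deduction rigorous. The observation that the maximum of the two lower bounds is within a factor two of their sum is also fine.

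The estimator consequence, however, contains a genuine error. Taking the median of a constant number of independent runs boosts the \emph{confidence} with which the guarantee $|\hat n - n^*|<n^*/2$ holds; it does not shrink the accuracy parameter from $n^*/2$ to below $n^*/3$. Each individual run only promises membership in the interval $(n^*/2,\,3n^*/2)$, and the median of points in that interval is still only guaranteed to lie in that interval. This matters because at accuracy $n^*/2$ the two confidence intervals $(n/2,\,3n/2)$ (for $n^*=n$) and $(n,\,3n)$ (for $n^*=2n$) overlap in $(n,\,3n/2)$, so a $\tfrac12$-accurate estimator does \emph{not} yield a distinguisher: the constant output $\hat n = 5n/4$ satisfies $|\hat n - n^*|<n^*/2$ for all three graphs with zero queries. (This looseness is present in the paper's own statement of the lemmas and the theorem; the consequence is only meaningful for an accuracy constant strictly below $1/3$, at which point the intervals $(2n^*/3,\,4n^*/3)$ are disjoint for $n^*\in\{n,2n\}$ and your thresholding reduction goes through directly, with no amplification needed.) So the fix is not a probabilistic trick but a correction of the accuracy constant: prove the consequence for $|\hat n - n^*|<n^*/c$ with $c>3$, or equivalently note that under the couplings of Lemmas~\ref{lem:perro} and~\ref{lem:perra} the distribution of $\hat n$ is nearly identical across the oracles, and a single distribution cannot place mass at least $p$ on two disjoint intervals once $p>1/2$.
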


\subsection{Oracle Sampling from the Neighbour Query Model}

In this section, we show that with access to the oracle $\uOracle(G)$, any
algorithm that predicts the number of nodes in a graph $G$ to within a small
constant fraction requires $\Omega\left(\frac{1}{\|\pi_G\|_2} + \davg \right)$
queries.  For proving the lower bounds we use graphs generated according to the
\emph{configuration model}~\cite{Bol:1980}. A vector $\dvec = (d_1,d_2, \dots
,d_n)$ is said to be \emph{graphical} if there exists an undirected graph on
$n$ nodes such that vertex $i \in [n]$ has degree $d_i$. We briefly describe
here how graphs are generated in the configuration model: 
\begin{enumerate}
	\item Create disjoint sets $W_i$, for $i \in \{1, \ldots n\}$, with $|W_i| = d_i$.
		The elements of $W_i$ are called \emph{stubs}.
	\item Create a uniform random maximum matching in the set
		$\displaystyle\bigcup_{i=1}^n W_i$  (note that $\sum_{i=1}^n d_i$ must be even
		since $\vd$ is graphical). 
	\item For a stub edge $\{x, y\}$ in the matching, such that $x \in W_i$ and $y
		\in W_j$, the edge $\{i, j\}$ is added to the graph. 
\end{enumerate}
The above procedure creates a graph where vertex $i$ has degree exactly $d_i$.
However, the graph may not be simple, \ie it may have multiple edges and
self-loops. Also, this procedure does not necessarily produce a uniform
distribution over graphs having degree sequence $\vd$.  The expected number of
multi-edges and self-loops in the graph is for many interesting graphs only a
small fraction and  in any graph with bounded degree their expected number is a
constant. 
%\url{https://www.ndsu.edu/pubweb/~novozhil/Teaching/767\%20Data/47_pdfsam_notes.pdf} Proposition 3.5. \fnote{Say something useful here. For power law it's probably okay. Note it's weird  that it doesn't depend on $n$ 0o Maybe we could say that for a gnp with average degree $\log n$ we only have $O(\log^2 n)$ multi edges and $\log n$ self -loops } 
 We use $G \sim
\mathcal{G}(\vd)$ to denote that a graph $G$ was generated in the configuration
model with degree sequence $\vd$.

Recall the definition of a \emph{sensible} algorithm as one that never makes a
query to which it already knows the answer. A \emph{sensible} algorithm has the
following behaviour: (i) for every $u\in V$ and $i\leq \deg(u)$ ($\deg^+(u)$
and $\deg^-(u)$, respectively) it makes the query $(\ell(u),i)$
($(\ell(u),i,\ein)$ and $(\ell(u),i,\eout)$, respectively) at most once, and
(ii) it never queries a $(\ell(v),i)$ if it has not received $\ell(v)$ as a
valid label or if $i \not \in [1,\deg(v)]$. Note that there exists for any
algorithm a sensible implementation which needs at most as many queries as the
original algorithm.  For technical reasons, in the proof of
\autoref{lem:gato}, we use an oracle $\mathcal{O}^s$ with {\emph
side-information} as an extension of $\mathcal{O}$: $\mathcal{O}^s$ returns,
upon query, exactly the same information as $\mathcal{O}$, but can add
additional truthful information. In particular, we allow the oracle when
queried $(\ell(v),i)$ to not only return the corresponding node $(\ell(u),
\deg(u))$, where $u$ is the $i\th$ element in the adjacency list of $v$, but
also the index, say $j$, in the adjacency list of $u$ which corresponds to $v$.
Clearly, any sensible algorithm wouldn't query $(\ell(u),j)$ after querying
$(\ell(v),i)$.

\begin{lemma}\label{lem:gato}
	Let $\epsilon > 0$. Let $\vd = (d_1, \ldots, d_n)$ be an arbitrary graphical
	sequence and let $D := \sum_{i = 1}^n d_i$. Let $G \sim \mathcal{G}(d)$, and
	let $\tilde{G}$ be an arbitrary graph with degree sequence $\vd$. There
	exists an implementation of an oracle $\uOracle^s(G)$ (with
	side-information) such that if $(\ell(X_t), \deg_G(X_t))_{t=1}^T$ is the
	sequence of responses to a \emph{sensible} algorithm, where
	\[ T := \min\left\{\min \{ \tau \geq 0 ~|~ \text{all neighbours of all known
	nodes are disclosed using $\tau$ queries} \}, \frac{\epsilon}{16} \cdot
	\sqrt{D} \right\}, \]
	and if $(\ell(\tilde{X}_t), \deg_{\tilde{G}}(\tilde{X}_t))_{t=1}^T$ is the
	sequence returned by oracle $\uOracle^{\pi}(\tilde{G})$, then there exists a
	coupling so that the sequences $((\ell(X_t), \deg_G(X_t)))_{t=1}^T$ and
	$((\ell(\tilde{X}_t), \deg_{\tilde{G}}(\tilde{X}_t)))_{t=1}^T$ are identical
	with probability at least $1 - \epsilon$.

\end{lemma}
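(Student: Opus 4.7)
The plan is to implement $\uOracle^s(G)$ via \emph{deferred revelation} of the configuration model: the graph $G \sim \mathcal{G}(\vd)$ is built up lazily, matching stubs only when the algorithm actually needs an answer. Because $G$ and $\tilde{G}$ share the degree sequence $\vd$, their stubs can be identified through a common index set $\Sigma := \{(v,j) : v \in \{1,\ldots,n\},\, j \in \{1,\ldots,d_v\}\}$ of size $D$, and a draw from $\pi_{\tilde{G}}$ is exactly the owner (in $\tilde{G}$) of a uniformly random element of $\Sigma$.

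To couple the two sequences, I would draw i.i.d.\ uniformly random stubs $S_1, \ldots, S_T \in \Sigma$ and define $\tilde{X}_t := \operatorname{owner}_{\tilde{G}}(S_t)$; this realises the $\uOracle^\pi(\tilde{G})$ output. The implementation of $\uOracle^s(G)$ uses the same $S_t$: on $\initquery$ it returns $\operatorname{owner}(S_1)$ as the seed; on the $t$-th query with $t\geq 2$, a sensible neighbour query $(\ell(v),i)$, call $S_t$ \emph{valid} if it is neither the query stub $(v,i)$ nor any of the $2(t-2)$ stubs already matched. If $S_t$ is valid, pair $(v,i) \leftrightarrow S_t$ and return $\operatorname{owner}(S_t)$ together with the reverse index $j$ as side information; otherwise, fall back to a freshly drawn uniformly random unmatched stub distinct from $(v,i)$. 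In either branch the partner is conditionally uniform over unmatched stubs other than $(v,i)$, so the matching built up is uniformly random and $G$ has the correct $\mathcal{G}(\vd)$ law.

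Let $B$ be the event that some $S_t$, $1 \leq t \leq T$, is invalid (the case $t=1$ imposes no constraint, since no stub is forbidden at the init query). Under $\overline{B}$, each step returns the same underlying vertex $\operatorname{owner}(S_t)$ in both oracles; the reported degrees coincide because $\vd$ is shared, and because consecutive labelling is a function only of the order of first discovery, the label sequences agree as well. Hence the sequences $(\ell(X_t), \deg_G(X_t))_{t=1}^T$ and $(\ell(\tilde{X}_t), \deg_{\tilde{G}}(\tilde{X}_t))_{t=1}^T$ are identical. A union bound controls $\Pr[B]$: at step $t\geq 2$ there are at most $2t-3$ forbidden stubs out of $D$, so $\Pr[B] \leq \sum_{t=2}^T (2t-3)/D \leq T^2/D \leq \varepsilon^2/256 \leq \varepsilon$, using $T \leq \varepsilon\sqrt{D}/16$.

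The main obstacle is checking that the hybrid implementation (prefer the pre-drawn $S_t$ when valid, otherwise a fresh draw) still preserves the $\mathcal{G}(\vd)$ marginal on $G$ when the algorithm adaptively chooses queries based on past responses. The key observation, which makes the argument go through, is that in either branch the conditional law of the newly matched stub is uniform over currently unmatched stubs other than $(v,i)$—exactly the rule dictated by the deferred-revelation view of the uniform random matching on $\Sigma$.
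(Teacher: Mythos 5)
Your proposal is correct and follows essentially the same route as the paper: the oracle reveals the uniform stub-matching lazily, so that each response is (nearly) a fresh uniform stub, and the failure probability is controlled by the $O(T^2/D)$ chance of hitting an already-committed stub, which is at most $\epsilon$ for $T\le \epsilon\sqrt{D}/16$. The only difference is presentational—you realise the coupling explicitly via pre-drawn stubs with a fallback draw, whereas the paper bounds and sums the per-step total-variation distances between the lazy oracle's response distribution and the stationary one; your version makes the implicit coupling step of the paper concrete.
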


\begin{remark} 
	Graphs $G \sim \mathcal{G}$ are not necessarily connected. Thus,
	\autoref{lem:gato} uses stopping time $T$ where all edges in the connected
	component involving the starting node, \ie the node sent in response to the
	$\initquery$ are uncovered by the algorithm, either through query responses
	or through side information. Clearly, at this point there is nothing left of
	the algorithm to do except return the number of observed nodes as its
	estimate, which can be arbitrarily far off. 
\end{remark}
\begin{proof}[Proof of \autoref{lem:gato} ]
	We consider the following implementation of $\uOracle^s(G)$, which generates
	the graph $G \sim \mathcal{G}(\vd)$ on the fly as it receives the queries.
	We assume that $\uOracle^s(G)$ uses a consecutive labelling function, \ie
	whenever it chooses a new node, it picks the smallest as yet unused natural
	number as the label. In response to the $\initquery$, it picks $X_1 \sim
	\pi$, sets $\ell(X_1) = 1$ and returns $(\ell(X_1), \deg_{G}(X_1))$.

	It starts by creating sets $W_u = \{ u_1, u_2, \ldots, u_{d_u} \}$ for $u =
	1, \ldots, n$. Let $\mathcal{F}_t$ be the filtration corresponding to all
	the random choices made including the answer to  the $t\th$ query (counting
	the $\initquery$ as the first one). Let $(l, i)$ be the $(t + 1)\th$ query
	made by the algorithm, let $v \in V$, such that $\ell(v) = l$. Since the
	algorithm is \emph{sensible}, it must be that  $(l, i)$ is queried for the
	first time and that the algorithm never learned, through side-information of
	the oracle, to which node the stub $v_i$ connects; this means
	$\uOracle^s(G)$ has not fixed the choice of partner for $v_i$ yet. 	
	
	The oracle chooses a random unmatched stub, say
	$v^\prime_{i^\prime}$ and returns $(\ell(v^\prime), \deg_{G}(v^\prime))$ as
	well as the \emph{side information} $i^\prime$, indicating that $v$ is the
	$(i^\prime)\th$ neighbour of $v^\prime$.

	Let $p^t$ denote the distribution over choice of $v^\prime$ made by
	$\uOracle^s(G)$ before responding to the $(t + 1)\th$ query. Let
	$\delta_u(t)$ denote number of edges of $u$ that have been already disclosed
	up to the answering of the $t\th$ query. This also includes possible
	side-information. Note that $\sum_u \delta(u) = 2 (t - 1)$, since every
	query other than the $\initquery$ reveals information about the two
	end-points of the disclosed edge. Thus, $p^t(u) = (d_u - \delta_u(t))/(D -
	2(t - 1))$. Consider the variation distance between $p^t$ and $\pi$,
	\begin{align*}
		\tvdist{\pi - p^t} &= \frac{1}{2} \sum_{u \in V} \left|\frac{d_u}{D} -
		\frac{d_u - \delta_u(t)}{D - 2(t - 1)}\right| 
		= \frac{1}{2} \sum_{u \in V} \left|\frac{-2(t - 1) d_u  + D
		\delta_u(t)}{D(D - 2t + 2)} \right| \\
		& \leq \frac{1}{2} \sum_{u \in v} \frac{2 (t - 1) d_u}{D (D - 2t + 2)} +
		\frac{1}{2} \sum_{u \in V} \frac{D \delta_u(t)}{D (D - 2t + 2)} \\
		&\leq \frac{1}{2} \cdot \frac{2(t -1 )}{D - 2t + 2} + \frac{1}{2} \cdot
		\frac{2(t - 1)}{D - 2t + 2} = \frac{2t - 2}{D - 2t + 2}
	\end{align*}
	In the final step above, we used the fact that $\sum_u d_u = D$ and $\sum_u
	\delta_u(t) = 2t - 2$. Note that if the stopping condition in the definition
	of $T$ has not occurred the above calculations are valid. Also, we know that
	$T \leq \frac{\epsilon}{16} \sqrt{D}$, thus summing up the variation
	distance $\tvdist{\pi - p^t}$ for $t \in \{1, \ldots ,T\}$ and observing that
	$\tvdist{p^0 - \pi} = 0$, since the response to the $\initquery$ is chosen
	from the stationary distribution, we get the desired result.	\end{proof}

\begin{theorem} \label{thm:lboqmodel}
	Let $\vd = (d_1, \ldots, d_n)$ be a graphical vector satisfying $\min_i d_i
	\geq 3$. Then there exists a graphical $\tilde{\vd} = (\tilde{d}_1, \ldots,
	\tilde{d}_n , \tilde{d}_{n + 1}, \ldots, \tilde{d}_{2n})$ with $\min_i
	\tilde{d}_i \geq 3$, such that for $G \sim \mathcal{G}(\vd)$ and $\tilde{G}
	\sim \mathcal{G}(\tilde{\vd})$, there exists $c > 0$, such that any
	algorithm with access to one of two oracles $\uOracle(G)$ or
	$\uOracle(\tilde{G})$ chosen equal probability, cannot distinguish between
	the two with probability greater than $1 - c$ unless it makes
	$\Omega\left(\frac{1}{\twonorm{\pi_G}} + \davg(G) \right)$ queries to the
	oracle.
\end{theorem}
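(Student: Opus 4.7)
My plan is to reduce the neighbour-query model to the stationary-query model via \autoref{lem:gato}, and then chain to the stationary-oracle lower bounds already established in \autoref{lem:perro} and \autoref{lem:perra}. Specifically, I would define $\tilde{\vd}$ (a length-$2n$ sequence) by cases: if $1/\twonorm{\pi_G} \ge \davg(G)$ set $\tilde{\vd}=(\vd,\vd)$ (mirroring the construction of \autoref{lem:perro}), otherwise set $\tilde{\vd}=(\vd,3,\ldots,3)$ with $n$ threes, adjusting a single entry for parity of the half-edge sum if needed (mirroring \autoref{lem:perra}, since a $3$-regular expander has exactly this degree profile). In either case $\tilde{\vd}$ is graphical and inherits $\min_i \tilde{d}_i \ge 3$ from $\vd$. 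Since the lower bound $\Omega(1/\twonorm{\pi_G}+\davg(G))$ is within a factor two of the maximum of its two terms, it suffices to prove the appropriate single-term lower bound in each case.

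In each case I would apply \autoref{lem:gato} twice: once to $G \sim \mathcal{G}(\vd)$ to couple the responses of $\uOracle^s(G)$ with those of $\uOracle^{\pi}(H)$ for a fixed graph $H$ of degree sequence $\vd$, and once to $\tilde{G} \sim \mathcal{G}(\tilde{\vd})$ to couple the responses of $\uOracle^s(\tilde{G})$ with those of $\uOracle^{\pi}(\tilde{H})$ for a fixed graph $\tilde{H}$ of degree sequence $\tilde{\vd}$. I would pick $H$ arbitrarily and let $\tilde{H}$ be precisely the graph built from $H$ in the proof of \autoref{lem:perro} or \autoref{lem:perra}, which has degree sequence $\tilde{\vd}$ by construction because the edge-swap there preserves every vertex degree. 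The stationary couplings inside those lemmas then yield that $\uOracle^{\pi}(H)$ and $\uOracle^{\pi}(\tilde{H})$ agree with probability $1-o(1)$ whenever the query count is at most a small constant times $1/\twonorm{\pi_H}$ or $\davg(H)$. Stitching the three couplings together by the triangle inequality in total variation gives that no sensible algorithm can tell $\uOracle(G)$ from $\uOracle(\tilde{G})$ with advantage exceeding some fixed constant $c$ within the target query budget; since the sensible restriction is without loss of generality, the same holds for arbitrary algorithms.

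The main obstacle, and the place requiring most care, is to ensure that the stopping budget $T \le \epsilon\sqrt{D}/16$ imposed by \autoref{lem:gato} is strictly weaker than the target lower bound $\Omega(1/\twonorm{\pi_G}+\davg(G))$. Using the assumption $\min_i d_i \ge 3$ we have $D \ge 3n$, and Cauchy--Schwarz yields $\twonorm{\pi_G}^{2} = \sum_i d_i^{2}/D^{2} \ge 1/n$, so $1/\twonorm{\pi_G} \le \sqrt{n} = O(\sqrt{D})$; similarly $\davg(G) = D/n \le \sqrt{D}$ whenever $D \le n^{2}$, which is automatic for any graphical sequence of a simple graph. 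Hence both target terms lie inside the coupling regime of \autoref{lem:gato}, and the three total-variation losses can be made arbitrarily small by choosing the query budget as a sufficiently small constant multiple of $1/\twonorm{\pi_G}+\davg(G)$. A secondary technicality is the handling of the rare self-loops and multi-edges produced by the configuration model (as well as the parity fix in $\tilde{\vd}$), which together contribute only $o(1)$ to the total-variation bounds and can be absorbed into the constant $c$.
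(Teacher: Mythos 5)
Your proposal follows essentially the same route as the paper: couple the neighbour-query oracle on the configuration-model graph to a stationary oracle on a fixed graph via \autoref{lem:gato}, then invoke the stationary-oracle indistinguishability couplings of \autoref{lem:perro} and \autoref{lem:perra}, and verify that both target terms are $O(\sqrt{D})$ so they fit inside the coupling budget. The one point you gloss over is the first branch of the stopping time $T$ in \autoref{lem:gato}: if the seed's connected component were exhausted before $\epsilon\sqrt{D}/16$ queries, the algorithm would learn the component size and the reduction would break; the paper rules this out by noting that $\min_i d_i \ge 3$ makes $G \sim \mathcal{G}(\vd)$ connected with high probability, and you should add this observation (your use of $\min_i d_i \ge 3$ only for $D \ge 3n$ does not cover it).
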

\begin{proof}
	First we notice that $\frac{1}{\twonorm{\pi_G}} =O(\sqrt{D})$ and $\davg(G)
	= O(\sqrt{D})$ if $D := \sum_{i = 1}^n d_i$. 

	We first prove the lower bound of $\Omega(1/\twonorm{\pi_G})$. We set
	$\tilde{\vd} = (d_1, \ldots, d_n, d_1, \ldots, d_n)$ as in the proof of
	\autoref{lem:perro}. Let $G_1$ and $G_2$ be an arbitrary graphs on $n$ and
	$2n$ vertices with degree sequences $\vd$ and $\tilde{\vd}$ respectively.
	Let $\mathcal{E}_1$ (resp. $\mathcal{E}_2$) be the event that the coupling
	of the sequences output by $\uOracle^s(G)$ (resp. $\uOracle^s(\tilde{G})$)
	and $\uOracle^\pi(G_1)$ (reps. $\uOracle^\pi(G_2)$) holds for the first $T$
	queries using \autoref{lem:gato}. Let $\mathcal{E}^\prime_1$ and
	$\mathcal{E}^\prime_2$ be the events that the stopping condition in the
	coupling in \autoref{lem:gato} happens due to $\epsilon\sqrt{D}/16$
	queries being made. Note that because of the condition $\min_i d_i \geq 3$
	(and $\min_i \tilde{d}_i \geq 3$) the graph $G \sim \mathcal{G}(\vd)$ (and
	$\tilde{G} \sim \mathcal{G}(\tilde{\vd})$) is connected with high probability.
	Thus, as $n \rightarrow \infty$, $\Pr{\mathcal{E}^\prime_1} \rightarrow 1$
	(the same for $\Pr{\mathcal{E}^\prime_2}$). Finally, let $\mathcal{E}_3$ be
	the event that the coupling in \autoref{lem:perro} holds. By choosing
	$\epsilon$ in \autoref{lem:gato} appropriately, we can bound the
	probability that at least one of the events $\mathcal{E}_1$,
	$\mathcal{E}_2$, $\mathcal{E}^\prime_1$, $\mathcal{E}^\prime_2$, and
	$\mathcal{E}_3$ does not occur, by some constant $c < 1$. Thus, we get the
	require result.

	The lower bound of $\Omega(\davg(G))$ follows similarly using
	\autoref{lem:perra} instead of \autoref{lem:perro}. 
\end{proof}

\section{Directed Graphs}\label{sec:directed}

In this section, we consider the query complexity of estimating the number of
nodes in directed graphs.  We first observe that estimating $n$ using the
approach of Katzir \etal~\cite{KLSC14} is not possible since the stationary
distribution of a node is in general not proportional to its degree. Another
obstacle is that the stationary distribution of a node can be exponentially
small as the graphs in Figures~\ref{fig:line} and~\ref{fig:shooting}
illustrate.  In particular, it takes an exponentially large sample drawn from
the stationary distribution to distinguish between the line graph of
Figure~\ref{fig:line} on $n$ nodes and the line graph on $2n$ nodes, since the
probability mass of the additional nodes is $2^{-\Omega(n)}$.  It is also not
very difficult to show that even with access to one of the two oracles
$\dOracle$ or $\dOracleOne$, $\Omega(n)$ queries are required to distinguish
the line graph on $n$ vertices from the line graph on $2n$ vertices.

As the example of the line graph reveals, unlike in the undirected case, rapid
mixing and low average degree are not sufficient conditions to design a good
estimator of the number of nodes using sublinear number of queries. The line
graph shows that in the directed case, rapid mixing does not imply short
(directed) diameter. One might hope that if one throws small diameter into the
mix, in addition to low average degree and rapid mixing, a better estimator
could be designed. In \autoref{sec:comet}, we show that this is not the
case. The problem of estimation remains stubbornly hard, and $\Omega(n)$
queries to the oracle $\dOracle$, and $2^{\Omega(n)}$ queries to the stationary
query oracle are required to achieve a good estimate of the number of nodes.

These observations suggest that exploring the graph, \eg through breadth-first
search, is much faster than sampling from the stationary distribution. The
question of interest is whether there is a property, satisfied by graphs of
interest, which yields a query complexity better than $\Omega(m)$.  We answer
this positively in \autoref{sec:phi}, where we introduce a parameter that
generalises the conductance $\phi_\varepsilon$ and give almost tight bounds on
the number of queries required to estimate $n$ up to an $\varepsilon$ relative
error. Our Algorithm $\mathsf{EdgeSampling}$ takes this parameter as an input
and terminates after $O(n/\phi_\varepsilon)$ queries which can be much smaller
than the  sample complexity of breadth-first search.

Before delving into these results, it is worth pointing out that results in
\autoref{sec:undirected} can be used if access to $\dOracleTwo$ is
provided. In this case, we can simply treat the graph as being undirected.
However, it is still interesting to understand whether the distinction between
in-neighbours and out-neighbours allows one to design better estimators. At
present, we are unaware of any graphs where this might be the case.

\paragraph{The Comet Graph}\label{sec:comet}

\begin{figure}[H]
\begin{center}
	\includegraphics[scale=0.7]{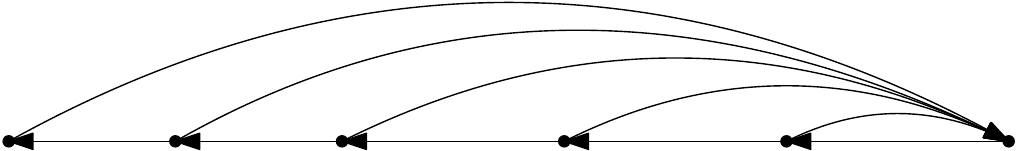}
	\caption{ The \text{Line graph} on $6$ nodes. }\label{fig:line}
\end{center}
\end{figure}

\begin{figure}[H]
\begin{center}
	\includegraphics[scale=0.7]{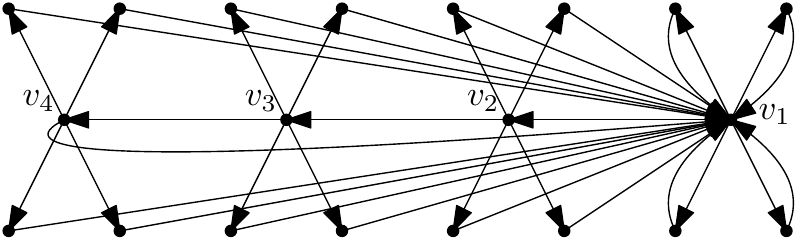}
\caption{ The graph $\text{Comet}(20,4)$.
}\label{fig:shooting}
\end{center}
\end{figure}

\begin{figure}[H]
\begin{center}
	\includegraphics[scale=0.7]{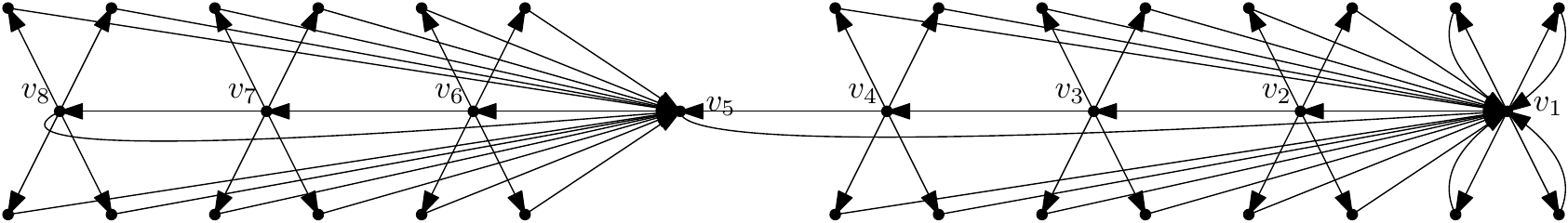}
\caption{ The graph $\text{DoubleComet}(40,8)$.
$\text{DoubleComet}(2n,2k)$ consists of two copies of $\text{Comet}(n,k)$ 
connected in the following way. First we remove all directed neighbours of $v_{k+1}$ as well as   the edge $(v_k,v_1)$. Then we add $(v_k,v_{k+1})$, $(v_{k+1},v_{k+2})$ as well as $(v_{k+1},v_{1})$.
}\label{fig:shooting2}
\end{center}
\end{figure}

The {\it Comet graph}, $\text{Comet}(n,k)$ is constructed as follows. Assume
that $k$ divides $n$. There is a directed cycle on the vertices $v_1,v_2,\dots,
v_k$, with edges $(v_i,v_{i+1})$  for $1\leq i <k$ and $(v_k,v_1)$.  We denote
these $k$ vertices as \emph{centres}.  For every $\ell \in [k]$, there is a
directed star $S_{\ell}=\{(v_{\ell},v_{\ell, j}):j\in [n/k-1]\}$ with centre in
$v_{\ell}$ of degree $n/k-1$. For each leaf $v_{\ell, j}$ in star $S_\ell$ with
$\ell\in [k],j \in[n/k-1]$, there is a directed edge to the first star centre
$v_1$, that is, $\{(v_{\ell, j},v_{1}):\ell \in [1,k], j\in [n/k-1]\}$.  We
write  $v_\ell^{ G}$ and $v_{\ell,j}^G$ to emphasise that the nodes belong to
graph $G$.

In the following lemma we obtain bounds on stationary distribution of the
nodes, which will allow us to obtain a bound on the query complexity w.r.t.
$\mathcal{O}^\pi$ (see \autoref{thm:Comet}).
\begin{lemma}\label{ShootTheDistributionIsTerribel}
	Let $n$ be a multiple of $k$. Let $\pi$ be the
	stationary distribution of the nodes in  $\text{Comet}(n,k)$.  Then, for $G=\text{Comet}(n,k)$ and $\tilde G
	    =\text{Comet}(2n,2k)$ it holds that $\tvdist{\pi_G - \pi_{\tilde G}}=O((\frac{k}{n})^{k-1}) $.
\end{lemma}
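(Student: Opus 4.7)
My plan is to compute the stationary distributions of both $G=\text{Comet}(n,k)$ and $\tilde G = \text{Comet}(2n,2k)$ in closed form (or closely enough), noting that the crucial parameter that appears in the transitions, namely the inverse out-degree $k/n$ of a centre, is the same for both graphs, and then bound the TV distance by exploiting the geometric decay along the cycle of centres.

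First I would set up the balance equations for the (non-lazy) random walk on $G$, using the natural parametrisation $x := k/n$. A centre $v_\ell$ receives mass only from its cyclic predecessor (plus, in the case of $v_1$, from every leaf, each of which has out-degree $1$ and so forwards all its mass to $v_1$); a leaf $v_{\ell,j}$ receives mass only from its centre $v_\ell$ with transition probability $x$. Solving the recursion gives
\begin{equation*}
\pi_G(v_\ell) = x^{\ell-1}\,\pi_G(v_1), \qquad \pi_G(v_{\ell,j}) = x^{\ell}\,\pi_G(v_1),
\end{equation*}
and the normalisation condition can be written explicitly in terms of geometric sums, yielding $\pi_G(v_1) = (1-x)/\bigl((1-x^k)(2-x)\bigr)$. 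The identical recursion applies in $\tilde G$ because $(2k)/(2n)=x$; only the range of $\ell$ is extended to $\{1,\dots,2k\}$, and one gets $\pi_{\tilde G}(v_1) = (1-x)/\bigl((1-x^{2k})(2-x)\bigr)$, so that $\pi_{\tilde G}(v_1) = \pi_G(v_1)/(1+x^k)$.

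Next I would compute the TV distance under the obvious identification $v_\ell^G \leftrightarrow v_\ell^{\tilde G}$ and $v_{\ell,j}^G \leftrightarrow v_{\ell,j}^{\tilde G}$ for $\ell \in [k]$. On common nodes, both stationary probabilities are proportional to the same quantity $x^{\ell-1}\pi_{v_1}$ or $x^\ell \pi_{v_1}$, with proportionality constant equal to the ratio $\pi_{\tilde G}(v_1)/\pi_G(v_1) = 1/(1+x^k)$; summing $|\pi_G(u)-\pi_{\tilde G}(u)|$ over all common nodes therefore contributes exactly $x^k/(1+x^k)$. On the extra nodes (the centres $v_{k+1},\dots,v_{2k}$ of $\tilde G$ and their leaves), the same geometric formulas show that their total $\pi_{\tilde G}$-mass equals a geometric tail starting at $x^k\,\pi_{\tilde G}(v_1)$, which sums to $x^k/(1+x^k)$ again. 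Adding the two contributions and dividing by $2$ gives $\tvdist{\pi_G - \pi_{\tilde G}} = x^k/(1+x^k) = O(x^k) = O((k/n)^{k-1})$.

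The main obstacle is not conceptual but bookkeeping: one has to keep the geometric sums straight and make sure the normalising constants are handled consistently, in particular verifying that the balance equation at $v_1$ (which receives both cyclic and leaf mass) is self-consistent and does not pin down $\pi(v_1)$, so that $\pi(v_1)$ comes out of the normalisation condition alone. Once that is done, the resulting bound is in fact better than what the lemma requires (of order $(k/n)^k$ rather than $(k/n)^{k-1}$), giving some slack that is convenient for absorbing lower order terms if one prefers to work with approximations rather than closed forms.
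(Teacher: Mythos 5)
Your proposal is correct and follows essentially the same route as the paper: solve the balance equations to get the geometric decay $\pi(v_\ell)=x^{\ell-1}\pi(v_1)$, $\pi(v_{\ell,j})=x^{\ell}\pi(v_1)$ with $x=k/n$ identical in both graphs, normalise, and split the TV distance into the common nodes and the extra nodes of $\tilde G$. Your bookkeeping via the constant ratio $\pi_{\tilde G}(v_1)/\pi_G(v_1)=1/(1+x^k)$ is slightly cleaner than the paper's triangle-inequality comparison to the limiting value $\tfrac{d-1}{2d-1}$ and yields the sharper exact answer $x^k/(1+x^k)=O((k/n)^k)$; your normalisation constant is also the correct one (the paper's displayed $\pi(v_1)$ has $2d+1$ where it should read $2d-1$, which does not affect the asymptotics).
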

%The proof can be found in the appendix. 
\begin{proof}%[Proof of \autoref{ShootTheDistributionIsTerribel}]
    Note that for $u\in V$ we have $\pi(v)=\sum_{u \in V} p(u,v)\pi(u)$. 
     Let $\ell \in [k]$, $u\in N^+(v_{\ell})\setminus \{ v_1 \}$ and $d=n/k$. We have that 
\begin{align}\label{agua}	 
\pi(u)=\pi(v_{\ell})/2d+\pi(u)/2.
\end{align}
 This implies that $\pi(v_{\ell})=d\pi(u)$ for every $u\in N^+(v_{\ell})$. In particular, for $u=v_{\ell+1}$ (consider $v_{k+1}=v_1$) it follows that 
\begin{equation}\pi(v_{\ell+1})=\pi(v_{\ell})/d.\end{equation}
% 
%\[\pi(v_{\ell+1})=\frac{1}{2d}\pi(v_{\ell})+\frac{1}{2}\pi(v_{\ell+1})=\frac{1}{d}\pi(v_{\ell}),\]
 Therefore, $\pi(v_{\ell})=\pi(v_1)/d^{\ell-1}$. Furthermore, 
 \begin{align}\label{fuego}
 \pi(v_{\ell, j})=\pi(v_{\ell+1})=\pi(v_1)/d^{\ell}
 \end{align}
  for every $\ell \in [k]$, $j\in [d-1]$.	
 %For the last star $S_k$ we have that $\pi(v_{kj})=\pi(v_k)/(d-1)=\pi(v_1)/(d^{k-1}(d-1))$. 
 Summing up everything,
\begin{eqnarray*}
1&=&\sum_{\ell=1}^{k}\left(\pi(v_{\ell})+\sum_{j=1}^{d-1}\pi(v_{\ell, j})\right)\\
 &=&\sum_{\ell=1}^{k}\frac{\pi(v_1)}{d^{\ell-1}}+\sum_{\ell=1}^{k}\sum_{j=1}^{d-1}\frac{\pi(v_1)}{d^{\ell}}\\
   &=&\pi(v_1)\left(\frac{1-1/d^k}{1-1/d}+ (d-1)\frac{1/d-1/d^{k+1}}{1-1/d}\right),\\
  &=&\pi(v_1)\left( \frac{2d -1 - \frac{2d+1}{d^k}}{d-1}  \right).
\end{eqnarray*}
%\begin{eqnarray*}
%1&=&\sum_{\ell=1}^{k}\left(\pi(v_{\ell})+\sum_{j=1}^{d-1}\pi(v_{\ell j})\right)\\
% &=&\sum_{\ell=1}^{k}\frac{\pi(v_1)}{d^{\ell-1}}+\sum_{\ell=1}^{k-1}\sum_{j=1}^{d-1}\frac{\pi(v_1)}{d^{\ell}}+\sum_{j=1}^{d-1}\frac{\pi(v_1)}{(d-1)d^{k}}\\
%  &=&\pi(v_1)\left(\frac{d^k-1}{(d-1)d^{k-1}}+\frac{d^{k-1}-1}{d^{k-1}}+\frac{1}{d^k}\right),
%\end{eqnarray*}
Rearranging the terms yields 
\begin{align}\label{oso}
\pi(v_1)=\frac{d-1}{2d -1 - \frac{2d+1}{d^k}}.%.\geq \frac{d-1}{2d-2}=\frac12.	
\end{align}
%The first part of the claim follows from \eqref{agua}, \eqref{fuego}, and \eqref{oso}.
We proceed by establishing a bound on $	\tvdist{\pi_G - \pi_{\tilde G}}$.
  	\begin{align*}%\label{manjana}
		\tvdist{\pi_G - \pi_{\tilde G}} &= 
		\frac{1}{2} \sum_{\ell=1}^k \left|  \pi(v_\ell^G) - \pi(v_\ell^{\tilde G})        \right|
		+\frac{1}{2} \sum_{\ell=1}^k\sum_{j=1}^{d-1} \left|  \pi(v_{\ell,j}^G) - \pi(v_{\ell,j}^{\tilde G})        \right|  
		   \\
		   &\phantom{oo}+ \frac{1}{2} \sum_{\ell=k+1}^{2k} \left| 0- \pi(v_\ell^{\tilde G})   \right|   + 
		   \frac{1}{2} \sum_{\ell=k+1}^{2k}\sum_{j=1}^{d-1} \left|  0 - \pi(v_{\ell,j}^{\tilde G})  \right|
		   \\
		   &\leq
		   		 \sum_{\ell=1}^k \left|  \pi(v_\ell^G) - \pi(v_\ell^{\tilde G})        \right|
		+ \sum_{\ell=k+1}^{2k} \left| \pi(v_\ell^{\tilde G})   \right| 
		   \\
		&\leq   \left|  \pi(v_1^G) - \pi(v_1^{\tilde G})        \right|
\sum_{\ell=1}^k \frac{1}{d^{\ell-1}} + \sum_{\ell=k+1}^{2k} \frac{1}{d^{\ell-1}}\\
&\leq 2 \left|  \pi(v_1^G) - \pi(v_1^{\tilde G})        \right| + \frac{\frac{1}{d^k} }{1-\frac{1}{d}}.
	\end{align*}
	 By triangle inequality, 
 \[
  \left|  \pi(v_1^G) - \pi(v_1^{\tilde G})        \right| \leq    \left|  \pi(v_1^G) - \    \frac{d-1}{2d -1} \right|   +    \left|  \pi(v_1^{\tilde G}) -    \frac{d-1}{2d -1} \right| \leq 2  \left|  \frac{d-1}{2d -1 - \frac{2d+1}{d^k}} -    \frac{d-1}{2d -1} \right| \leq 4   \frac{2d+1}{d^k}.  
  \]
Putting everything together yields $	\tvdist{\pi_G - \pi_{\tilde G}} = O\left (\left(\frac{k}{n}\right)^{k-1}\right).$

\end{proof}

To establish the  mixing time of $\text{Comet}(n,k)$, we will make use of the
result of Levin, Peres and Wilmer \cite{LPW06} which relates the mixing time of
a Markov chain to the probability that two copies of the chain meet.

\begin{theorem}\label{thm:Comet}
Let $n$ be a multiple of $k$. Then $\text{Comet}(n,k)$ has   mixing time  $\tmix=O(1)$ and diameter $k$. Furthermore, any algorithm requires at least $\Omega((\frac{n}{k})^{k-1})$ queries to $\mathcal{O}^\pi$ and  $\Omega( n)$  queries to $\dOracle$ to distinguish between     $G=\text{Comet}(n,k)$ and $\tilde G =\text{Comet}(2n,2k)$.	\end{theorem}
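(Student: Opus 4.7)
The diameter bound is by direct inspection: the directed cycle on the centres $v_1,\ldots,v_k$ has diameter $k-1$, every leaf reaches $v_1$ in a single step, and from $v_1$ any target node is within $k-1$ cycle-hops plus at most one leaf-hop. The longest shortest path---realised, for instance, between two leaves in different stars---therefore has length exactly $k$.

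\textbf{Mixing time.} For $\tmix = O(1)$ I would exploit the ``funnel'' structure, in which all out-trajectories quickly pass through $v_1$. From any leaf, a non-lazy step reaches $v_1$ with probability $1/2$. From any centre $v_\ell$ with $\ell \neq 1$, a constant number of lazy steps suffice to leave $v_\ell$ and, conditional on leaving, the walk moves to a leaf with probability $1 - k/n = 1 - o(1)$, so it reaches $v_1$ two steps later with a constant probability. Hence from any starting node the chain sits at $v_1$ after some $t_0 = O(1)$ steps with probability at least an absolute constant $c > 0$. Running two independent copies and coupling them once both occupy $v_1$, the classical coupling inequality gives $\tvdist{p^{t_0}(u,\cdot) - p^{t_0}(v,\cdot)} \leq 1 - c^2$; submultiplicativity of total variation distance over iterations of the chain then brings this below $e^{-1}$ in $O(1)$ further steps, yielding $\tmix = O(1)$.

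\textbf{Stationary lower bound.} Invoking \autoref{ShootTheDistributionIsTerribel}, $\tvdist{\pi_G - \pi_{\tilde G}} = O((k/n)^{k-1})$. Subadditivity of total variation on product measures implies that the joint distributions of $T$ i.i.d.\ samples from $\pi_G$ and $\pi_{\tilde G}$ differ in total variation by at most $T \cdot O((k/n)^{k-1})$; since the labels produced by the consecutive labelling scheme are a deterministic function of the sample sequence, the same bound applies to the labelled views observed by the algorithm. Distinguishing with constant success probability forces this quantity to be $\Omega(1)$, giving $T = \Omega((n/k)^{k-1})$.

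\textbf{Neighbour-query lower bound and main obstacle.} For $\dOracle$ I would design an adversarial oracle implementation that, for each centre $v_\ell$, orders the out-neighbour adjacency list so that the $n/k - 1$ leaves of $S_\ell$ appear first and $v_{\ell+1}$ (or $v_1$ when $\ell = k$) last. All local observations---out-degrees ($n/k$ for centres, $1$ for leaves) and the fact that every leaf's unique out-neighbour is $v_1$---are identical in $G$ and $\tilde G$; the only way to separate the two graphs is to discover the extra centre $v_{k+1}^{\tilde G}$, equivalently, to observe that the cycle in $\tilde G$ fails to close at $v_k$. Under the adversarial ordering, locating $v_{\ell+1}$ from $v_\ell$ requires querying all $n/k$ indices in $v_\ell$'s adjacency list, and this must be repeated for $k$ consecutive centres to reach the point where $G$ and $\tilde G$ diverge, totalling $\Omega(n)$. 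The delicate step---and, I expect, the only non-routine one---is formalising the intuition via a deferred-decision adversary in the spirit of \autoref{lem:gato}: the oracle only commits to the position of $v_{\ell+1}$ in $v_\ell$'s adjacency list when the algorithm has already queried every earlier index, so that after any $T < cn$ queries (for a suitable constant $c$) the algorithm's view can be coupled identically between $G$ and $\tilde G$, ruling out any cleverer exploration strategy.
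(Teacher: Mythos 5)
Your proposal follows the paper's proof in all essentials: the diameter is read off the construction, the mixing time comes from the fact that the walk funnels to $v_1$ in $O(1)$ steps with constant probability plus a coupling/iteration argument (the paper shows $\Pr{X_2=v_1\mid X_0=u}\geq p$ for all $u$ and applies \autoref{lem:LPW06}), and the $\mathcal{O}^\pi$ bound is exactly the paper's use of \autoref{ShootTheDistributionIsTerribel} together with subadditivity of total variation over $T$ independent samples. The one place you diverge is the $\dOracle$ bound, and there your intermediate claim is wrong as stated: with a \emph{fixed} ordering that places $v_{\ell+1}$ last in $v_\ell$'s list, an algorithm that knows $\deg^+(v_\ell)=n/k$ simply queries index $n/k$ at each centre and traverses the cycle in $k$ queries, so a fixed adversarial ordering does not force $\Omega(n/k)$ queries per centre. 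Your deferred-decision adversary (answer ``leaf'' until all but one index of $v_\ell$ has been probed) repairs this and is a valid alternative; the paper instead takes the adjacency lists to be uniformly random permutations, coupled to agree on the nodes common to $G$ and $\tilde G$, so that any $o(n/k)$ probes at a centre miss the successor with probability $1-o(1)$. Both routes work; the random-permutation version makes the coupling between the two oracles immediate, while the adversary version gives a cleaner worst-case count but requires you to verify that the adaptive answers remain simultaneously consistent with some ordering in each of $G$ and $\tilde G$. Either way, you should drop the fixed-ordering sentence, since it is the deferred (or randomised) choice of the successor's position that carries the argument.
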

%The proof can be found in the appendix.     
\begin{proof}%[Proof of \autoref{thm:Comet}]
Observe that the diameter of the graph is indeed $k$. 
Let $(X_t)_{t\geq 0}$, $(Y_t)_{t\geq 0}$ be random walks on $\text{Comet}(n,k)$  with $X_0=x$ and $Y_0=y$, where $x,y \in V$.
Whenever $X_s= Y_s$, then we can couple $X_t=Y_t$ for $t\geq s$.
Let $T= \min \{ t\geq 0 ~|~ X_t=Y_t\}.$
 Starting from any vertex $u \in V$,  with constant probability  at least $p > 0$, the random walk is at $v_1$ after  $2$ steps, \ie $\Pr{X_2=v_1~|~X_0=u} \geq p$ for $u\in V$.
 In particular, using independence, $\Pr{X_2=Y_2~|~X_0=u,Y_0=v} \geq \Pr{X_2=v_1~|~X_0=u}\cdot \Pr{Y_2=v_1~|~Y_0=v} \geq p^2$ for $u,v\in V$.
Iterating this and using independence shows 
  \[
  \Pr{T > 2t} \leq \Pr{X_{2t}\neq Y_{2t}~|~X_0=u,Y_0=v} \leq (1- p^2)^t\leq e^{-1}.
  \]
	For some large enough constant $t$.  From \autoref{lem:LPW06}
	(\cite{LPW06}) and using the definition of the mixing time we derive
	$\tmix=O(1)$.  We now consider the query complexity w.r.t.
	$\mathcal{O}^\pi$.  By \autoref{ShootTheDistributionIsTerribel} we have
	that $	\tvdist{\pi_G - \pi_{\tilde G}} = O\left
	(\left(\frac{k}{n}\right)^{k-1}\right).$ This implies that the first
	$\Omega\left(\left(\frac{n}{k}\right)^{k-1}\right)$ samples can be coupled
	with constant probability.  Thus, in order to distinguish between $G$ on $n$
	nodes and $\tilde G$ on $2n$ nodes one requires
	$\Omega\left(\left(\frac{n}{k}\right)^{k-1}\right)$ samples.  We now
	consider the query complexity w.r.t. $\dOracle$.  In response to the
	$\initquery$ query, both oracles $\dOracle(G)$ and $\dOracle(\tilde G)$ pick
	$v_1$, set $\ell(v_1) = 1$ and return $(\ell(v_1), \deg^+(v_1))$.  Any
	algorithm has to reach at least node $v_k$ in order to distinguish between
	$G$ and $\tilde G$. We assume that the oracles $\dOracle(G)$ and
	$\dOracle(\tilde{G})$ use independent random adjacency lists at all nodes.
	But we couple them to be the same for the nodes that are common to both
	graphs. Clearly, unless $\Omega(n/k)$ queries are made to the out-neighbour
	lists of each of $v_1, v_2, \ldots, v_{k-1}$, the node $v_k$ cannot be
	discovered, which in turn means that it is impossible to distinguish between
	the oracles $\dOracle(G)$ and $\dOracle(\tilde{G})$.  Thus, $\Omega( k \cdot
	n/k) = \Omega(n)$ queries are required.  This concludes the proof.
\end{proof}
Note that the above results only apply to the oracles $\dOracle$ and
$\mathcal{O}^\pi$, but not to $\dOracleOne$, since the in-degrees make it easy
to distinguish between the two graphs.  However, it is straightforward to
extend the graph such that the sample complexity remains $\Omega(n)$ even if
the in-degrees are known; thus even with access to $\dOracleOne$, $\Omega(n)$
queries are required. 

\begin{observation}\label{obs:doublecomet}
	Let $n$ be a multiple of $k$. Then $\text{DoubleComet}(2n,2k)$ (Defined in
	Figure~\ref{fig:shooting2}) has mixing time   $\tmix=O(1)$ and diameter
	$2k$. Furthermore, any algorithm requires at least $\Omega(n)$ queries to
	$\dOracleOne$ to distinguish between 	$G=\text{Comet}(n,k)$ and $\tilde G
	=\text{DoubleComet}(2n,2k)$ on $2n$ nodes.
\end{observation}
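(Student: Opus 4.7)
The plan is to mirror the proof of \autoref{thm:Comet} with adjustments for the altered graph topology and for the additional in-degree information returned by $\dOracleOne$. Three things need to be checked: (i) the diameter is $2k$, (ii) the mixing time is $O(1)$, and (iii) no algorithm making $o(n)$ queries to $\dOracleOne$ can distinguish $G = \text{Comet}(n,k)$ from $\tilde G = \text{DoubleComet}(2n,2k)$.

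The diameter bound is immediate from inspection: the longest directed shortest path runs through both halves of the construction, using the bridge $(v_k,v_{k+1})$, and is of length $2k$. For the mixing time, I would reuse the meeting-time argument of \autoref{thm:Comet}: from any vertex of $\text{DoubleComet}(2n,2k)$ a lazy random walk reaches one of the two ``hub'' vertices $v_1$ or $v_{k+1}$ within a constant number of steps with constant probability, because every leaf points to a hub in one step and the cycle is short. Two independent lazy walks then meet in $O(1)$ expected time by independence, and the standard coupling-to-mixing bound yields $\tmix = O(1)$.

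The central new ingredient, relative to \autoref{thm:Comet}, is that $\dOracleOne$ reports the in-degree of every newly revealed vertex, so the coupling of oracle responses will break if some vertex visible in $o(n)$ queries has a different in-degree in the two graphs. I would verify that the construction preserves every in-degree inside the first copy. The only non-obvious case is $v_1$: removing the cycle edge $(v_k,v_1)$ destroys one in-edge, while adding $(v_{k+1},v_1)$ restores exactly one; the leaves of the first copy still contribute $n-k$ in-edges, and hence $\deg^-_{\tilde G}(v_1) = n - k + 1 = \deg^-_G(v_1)$. The remaining first-copy centres $v_2,\dots,v_k$ and their leaves clearly retain in-degree $1$.

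With matching in-degrees on the visible portion, I would couple $\dOracleOne(G)$ and $\dOracleOne(\tilde G)$ by using identically distributed random adjacency lists on every first-copy vertex, exactly as in \autoref{thm:Comet}. The two oracles produce identical responses until the algorithm discovers some second-copy vertex, which in turn requires first finding $v_{k+1}$ through the bridge $(v_k, v_{k+1})$, and therefore first traversing the cycle $v_1\to v_2\to\cdots\to v_k$. Since each $v_\ell$ has $n/k$ out-neighbours and the cycle edge is placed uniformly at random in the adjacency list, an expected $\Omega(n/k)$ queries at each of $v_1,\dots,v_{k-1}$ are needed to locate the next cycle edge, for a total of $\Omega(k \cdot n/k) = \Omega(n)$ queries. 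The main obstacle is the in-degree bookkeeping at $v_1$; once it is checked that the modifications to $v_1$'s incident edges cancel, the remainder of the argument is a direct adaptation of the proof of \autoref{thm:Comet}.
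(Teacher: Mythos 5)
Your proposal is correct and follows essentially the same route as the paper, which simply defers to the proof of \autoref{thm:Comet} and notes that the in-degrees of $v_1,\dots,v_k$ reveal nothing; your explicit check that the removed edge $(v_k,v_1)$ and the added edge $(v_{k+1},v_1)$ cancel in $\deg^-(v_1)=n-k+1$ is precisely the point the paper leaves implicit. The only slight imprecision is in the mixing-time sketch: the two walks must coalesce at a \emph{single} vertex, so one should note that from either hub the walk reaches $v_1$ itself within one further step with constant probability (via the edge $(v_{k+1},v_1)$), after which the coupling argument of \autoref{thm:Comet} applies verbatim.
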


The proof is along the same lines as the proof of \autoref{thm:Comet}.
Observe that the in-degrees of nodes $v_1,\dots, v_k$ give no extra information
about the size of the graph.
   
\subsection{Assuming a Bound on the Connectivity}\label{sec:phi}

In this section we introduce the parameter $\emph{general conductance}$. We
first recall some graph notation in the directed setting. Given a non-empty
proper subset of vertices $S\subset V$, let $\deg^+(S)=|\{(u,v)\in E:u\in S\}|$
be the out-degree of $S$. The {\it cut} of $S$, $\partial S$, is the set of
edges crossing between $S$ and $V\setminus S$, that is, $\partial S=\{(u,v)\in
E:u\in S,v\notin S\}$.  The \emph{general conductance} of $S$, $\phi(S)$, is
the ratio between the cut of $S$, and the out-degree of $S$. That is,
$\phi(S)=|\partial S|/\deg^+(S).$ Given $\varepsilon>0$, the graph
$\varepsilon$-general conductance, $\phi_{\varepsilon}$, is the minimum of
$\phi(S)$ over every non-empty proper subset of $V$ of size at most
$(1-\varepsilon)|V|$, \ie % \vspace{-0.4cm}\hspace{-1cm}
$
\phi_\varepsilon(G)=\min_{\substack{S\subseteq V:1\le |S|\le (1-\varepsilon)|V|}}\phi(S).
$ 
Note that the parameter $\phi_\varepsilon$ decreases monotonically as
$\varepsilon$ decreases. In the undirected setting for $\varepsilon=1/2$ this
is just what is commonly known as the conductance.\footnote{The definition of
conductance in the directed setting is more involved and more importantly
doesn't seem be directly relevant to the question of estimating the number of
nodes. It suffers from similar problems as the skewed stationary distributions.
Graphs having poor connectivity to a large fraction of the nodes may still have
very \emph{good} conductance if the total mass of the poorly connected nodes
under the stationary distribution is very small.}

In the following we describe the algorithm that estimates the graph size. 

\paragraph{Upper bound in terms of the general conductance.} 

We consider algorithm $\mathsf{EdgeSampling}$  for estimating the number of
nodes. The algorithm takes as input the parameter $\phi$, a lower bound  on the
general conductance $\phi_\varepsilon$.  The query complexity is
$O(n/\phi_\varepsilon)$ and the output estimate $\hat n$ satisfies
$(1-\varepsilon)n \leq \hat n \leq n$ with arbitrary confidence controlled by
an input parameter~$\ell$.  Observe that $O(n/\phi_\varepsilon)$ can be much
smaller than the run time of breadth-first search $\Omega(m)$. \medskip 

\noindent  {\it Algorithm overview.} The algorithm works as follows.  At each
time step the algorithm maintains a counter $Y$.  If at some point the counter
exceeds the threshold $\ell$, then the algorithm terminates.  The algorithm
divides the queries into blocks of length at most $2/\phi$ corresponding to the
execution of the  {\bf for} loop.  In each block, at every step the algorithm
samples one outgoing edge uniformly at random from those available and not
queried before.  If at any step a new node is disclosed, then this finishes the
block (break of the {\bf for} loop) and the counter $Y$ is decreased by $1$.
If the the block finishes without finding a new node, then the counter is
increased by $1$.  Once the counter reaches $\ell$, which will happen
eventually, then the algorithm outputs the number of nodes it discovered.  Even
though our goal is to minimise the query complexity, it is worth noticing that
the time and space complexity can be kept low by choice of suitable data
structures. Although the oracle returns labels of nodes, we use nodes and their
labels interchangeably in the algorithm and the analysis of
\autoref{thm:upper}.
   
%$c_1$ and $c_2$ are some universal constants chosen later.

%\begin{figure}
\begin{algorithm}
\caption{$\mathsf{EdgeSampling}(\dOracle,\ell,\phi)$}
%\caption{Edge-Sampling$(\varepsilon,\delta)$}
\label{algo:edge-sampling}
\begin{algorithmic}[1]
\STATE $Y_0=0$ (fail surplus counter)
\STATE $v = (node)~  \dOracle.\initquery$ (query oracle to get the initial node)
\STATE $S_0= \{ v\}$
\STATE $E_{0} = \{ (v,i) ~|~ i\leq \deg^+(v)  \}$ (set of undisclosed edges)
\STATE $t=1$
\WHILE{ $Y_t \leq \ell $ }
\FOR{ $\tau=1$ \TO $2/\phi$ }
\STATE  choose $(u,i)$ uniformly at random from $E_{t,\tau-1}$.
\STATE  $v=  (node)~ \dOracle.(u,i,\eout)$
\IF{$v \not\in S_{t-1}$ }
%\STATE $Y_t=Y_{t-1}-1$
\STATE $S_t=S_{t-1} \cup \{ v\}$
\STATE $E_{(t-1)\cdot 2/\phi+\tau} \leftarrow (E_{(t-1)\cdot 2/\phi+\tau-1} \cup \{  (v,i) ~|~ i\leq \deg^+(v)  \}) \setminus \{ (u,i) \} $
%\STATE break (for loop) and goto 17: \fnote{Are there labels for this?}
\BREAK{ }
\ELSE
\STATE $E_{t\cdot 2/\phi} \leftarrow E_{t\cdot 2/\phi-1} \setminus \{ (u,i) \} $
\ENDIF
\ENDFOR
\IF{$|S_t| = |S_{t-1}|+1$}
%\STATE $X_t=1$
\STATE $Y_t \leftarrow Y_t -1$
\ELSE
%\STATE $X_t=0$
\STATE $Y_t \leftarrow Y_t + 1$
\STATE $S_t \leftarrow S_{t-1}$
\ENDIF
\STATE $t\leftarrow t+1$

\ENDWHILE
\STATE Output $|S_t|$.
\end{algorithmic}
\end{algorithm}
%\caption{The algorithm  $\hat n$ satisfying $\hat n \geq (1-\varepsilon)n$ w.p. at least $1-2^{-\ell}$. }
%\label{algo}
%\end{figure}

%THIS IS From Prof. Tom Leighton
% Dr. Marten van Dijk
%http://ocw.mit.edu/courses/electrical-engineering-and-computer-science/6-042j-mathematics-for-computer-science-fall-2010/readings/MIT6_042JF10_chap20.pdf

%\begin{lemma}\label{lem:upper}
\begin{theorem}\label{thm:upper}
	Algorithm $\mathsf{EdgeSampling}(\dOracle,\ell,\phi)$ on graph $G$ has a query complexity of 
	$\min\{ 2(2n+\ell)/\phi, m\}$
	and outputs an estimate $\hat n \leq n$.
	 Furthermore, if $G$ has general conductance $\phi_{\varepsilon}(G)$ of at least $\phi$, then the algorithm satisfies $\hat n \geq (1-\varepsilon)n$ w.p. at least $1-2^{-\ell}$.
\end{theorem}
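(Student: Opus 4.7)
The plan is to prove the three claims separately: the deterministic bound $\hat n \leq n$, the query complexity, and the probabilistic lower bound on $\hat n$ under the conductance assumption.

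The first two parts are essentially bookkeeping. Since $\hat n = |S_T| \leq |V| = n$, the upper bound is immediate. For the query bound, each pass through the inner for-loop performs at most $2/\phi$ oracle queries, and every such query removes a distinct edge from $E$, so the total is at most $m$. The counter $Y$ moves by $\pm 1$ per outer iteration, taking a $-1$ step iff the block discovered a new node; hence the number of $-1$ steps is at most $n-1$, and upon termination the number of $+1$ steps is at most $\ell + n - 1$. The outer loop therefore runs at most $2n + \ell - 1$ times, yielding the $2(2n+\ell)/\phi$ query bound.

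For the probabilistic lower bound, the key structural observation is a \emph{cut-edge invariant}: at the start of any block $t$, every cut edge $(u,v) \in \partial S_{t-1}$ is still undisclosed. Indeed, had such an edge been queried earlier, the node $v \not\in S_{t-1}$ would have been added to $S$ at that point, a contradiction. Combined with the assumption $\phi_\varepsilon(G) \geq \phi$, this gives that whenever $|S_{t-1}| \leq (1-\varepsilon)n$, cut edges constitute at least a $\phi$-fraction of undisclosed outgoing edges of $S_{t-1}$, and this fraction only grows during the block since the only edges removed without triggering a break are non-cut. A product bound over the at most $2/\phi$ samples within the block then shows that a block fails to discover a new node with probability at most $(1-\phi)^{2/\phi} \leq e^{-2} < 1/3$, conditionally on the history.

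To convert this per-block estimate into the $2^{-\ell}$ failure probability, I would analyse $Y_t$ via the exponential process $M_t := 2^{Y_t}$. Let $\tau$ be the first time that either $Y_t \geq \ell$ or $|S_t| \geq (1-\varepsilon)n$. For $t \leq \tau$, writing $p = p_t \leq e^{-2}$ for the conditional probability that $Y$ increases, we get $\E{M_t \mid \mathcal{F}_{t-1}} = M_{t-1}\bigl(2p + (1-p)/2\bigr) \leq M_{t-1}$, the last inequality holding because $p \leq 1/3$. Thus $(M_{t \wedge \tau})$ is a bounded supermartingale, and optional stopping gives $\E{M_\tau} \leq M_0 = 1$. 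On the event $\{Y_\tau \geq \ell\}$ we have $M_\tau \geq 2^\ell$, so Markov yields $\Pr{Y_\tau \geq \ell} \leq 2^{-\ell}$. The main obstacle I expect is the careful justification of the cut-edge invariant when sampling without replacement; once that is in place, the martingale argument is routine.
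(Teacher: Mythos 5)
Your proof is correct, and the combinatorial bookkeeping (the $\hat n \leq n$ bound, the $\min\{2(2n+\ell)/\phi,\, m\}$ query count via counting $\pm 1$ steps of $Y$ and noting each query consumes a distinct edge, and the per-block failure bound $(1-\phi)^{2/\phi} < 1/3$) matches the paper's argument; your explicit cut-edge invariant is in fact a cleaner justification of the step the paper states more tersely, namely that sampling without replacement from the \emph{undisclosed} out-edges of $S_{t-1}$ still gives success probability at least $|\partial S_{t-1}|/\deg^+(S_{t-1}) \geq \phi$ per draw, since no cut edge of the current $S_{t-1}$ can already have been disclosed. Where you genuinely diverge is the final step: the paper couples $(Y_t)$ with a biased gambler's-ruin walk $(Z_t)$ on $[0, n+\ell]$ started at $n$ with up-probability $1/3$, and invokes the closed-form ruin probability (its Proposition on the random walk) to get $\Pr{Z_{T'} = n+\ell} \leq 2^{-\ell}$; you instead run the exponential supermartingale $M_t = 2^{Y_t}$, check $\E{M_t \mid \mathcal{F}_{t-1}} \leq M_{t-1}$ precisely because $2p + (1-p)/2 \leq 1$ for $p \leq 1/3$, and apply optional stopping plus Markov. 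The two are essentially the same computation in disguise (the ruin formula is derived from the same harmonic function $(\tfrac{1-p}{p})^{Z_t}$), but your version is self-contained, dispenses with the auxiliary proposition, and handles the non-i.i.d., adapted failure probabilities $p_t \leq 1/3$ without needing to set up an explicit monotone coupling; the paper's version buys an exact hitting probability rather than an inequality, which it does not actually need. One small point to tidy in a full write-up: your stopping time $\tau$ triggers at $Y_\tau = \ell$ while the algorithm only halts once $Y$ exceeds $\ell$, so you should note that the failure event $\{\hat n < (1-\varepsilon)n\}$ is contained in $\{Y_\tau \geq \ell\}$, making your bound (slightly conservative but) valid.
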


\begin{proof}%[Proof of \autoref{thm:upper}]
	First observe, that per iteration of the {\bf for} loop at most one new node
	is discovered.  Let  $Y_t$ be  the variables defined in the algorithm and
	define  $X_t=|S_t| - |S_{t-1}|$. In particular $X_t\in \{0,1\}$.  Note that
	the number of times  the counter $Y$ decreases, \ie $Y_t < Y_{t-1}$,  during
	the first $\tau$ iterations of the {\bf for} loop is $\sum_{i\leq \tau} X_i
	\leq n $, since the counter increases  when a new node is discovered.
	Therefore, the number of queries is bounded by $2\cdot (2n+\ell)/\phi$.  Moreover,
	every edge is only queried at most once yielding the claimed  bound on the
	query complexity.  We now prove that  the output $\hat n := |S_t|$ satisfies
	$\hat n \geq (1-\varepsilon)n$ under the conditions in the statement.

In the remainder we assume $\phi_{\varepsilon}(G) \geq \phi$  since otherwise the statement is trivially true.
Let 
\[T=\min \left\{ t\geq 0 ~\Big|~  \sum_{i=0}^t X_i \geq n(1-\varepsilon) \text{ or }  Y_t =\ell  \right\}.
\]

Note that every iteration  $j\leq T$  of the {\bf for} loop satisfies that the probability of finding a new node is at least $|\partial S_{j}|/\deg^+(S_{j}) \geq \phi_\varepsilon(G)\geq \phi$.
      %Hence, by Markov inequality 
      Hence, we have that $\Pr{X_j=0}\leq (1-\phi)^{2/\phi} < 1/3. $
Observe that $X_t =1$ implies $Y_t=Y_{t-1}-1$ and $X_t=0$ implies $Y_t=Y_{t-1}+1$.
Consider the Markov chain $(Z_t)_{t\geq 0}$ defined in \autoref{pro:CaminataAleatoriaParcial} with $s=n,b=n+\ell$ and $p=1/3$.
We  couple $Y_t$ and $Z_t$ for $t\leq T$ such that $Y_t  + n \leq Z_t$, since $\Pr{Y_t=Y_{t-1}+1}=\Pr{X_t=0} \leq 1/3=p=\Pr{Z_t=Z_{t-1}+1}$. 
Let 
\[T'=\min \left\{ t\geq 0 ~\Big|~  Z_t \in \{ 0, n+\ell \}  \right\}.
\]
From \autoref{pro:CaminataAleatoriaParcial} we get that
\[
\Pr{Z_{T'}=n+\ell}=
\frac{\left(\frac{2/3}{1/3}\right)^{n} - 1  }{ \left(\frac{2/3}{1/3}\right)^{n+\ell} - 1}\leq 2^{-\ell}.
\]
Due to our coupling we have $Y_t+n \leq Z_t$ and observe that $Z_t=0$ implies that $\sum_{i}^t X_i \geq n$  and therefore we have $T\leq T'$.
Hence,
$Z_{T'}=0$ implies $Z_{T} < n+\ell$ which in turn implies 
$Y_T < \ell$.
Thus,
\begin{eqnarray*}
\Pr{|S_T|\geq n(1-\varepsilon) } &=& \Pr{\sum_{i=0}^T X_i \geq n(1-\varepsilon) }\\ 
						&=& \Pr{Y_T< \ell}\\
						&\geq&  \Pr{Z_T< n+\ell}\\
						&\geq&  \Pr{Z_{T'}< n+\ell} \\
						&=&  1-\Pr{Z_{T'}=n+\ell}\\
						&\geq&  1-2^{-\ell}, 	
\end{eqnarray*}
%\begin{align*}
%\Pr{|S_T|\geq n(1-\varepsilon) } &= \Pr{\sum_{i=0}^T X_i \geq n(1-\varepsilon) } = \Pr{Y_T< \ell} \geq  \Pr{Z_T< n+\ell} \geq  \Pr{Z_{T'}< n+\ell} \\&=  1-\Pr{Z_{T'}=n+\ell}
%\geq  1-2^{-\ell}, 	
%\end{align*}
 %
which concludes the proof.
\end{proof}

%\end{lemma}
%The proof can be found in the appendix. 
Observe that the error made by Algorithm  $\mathsf{EdgeSampling}$ is one-sided --- the estimate never exceeds $n$.
Allowing a two-sided error and  given knowledge of $\varepsilon$, one can instead output an estimate, which has a smaller additive error.
This is summarised in the following observation. 
\begin{observation}\label{obs:upper}
	Consider a modification of Algorithm
	$\mathsf{EdgeSampling}(\dOracle,\ell,\phi)$ on graph $G$ which takes the
	additional parameter $\varepsilon$  and  outputs $ \hat n^* :=
	|S_t|(1+\frac{\varepsilon}{2-\varepsilon})$ instead of $|S_t|$.  If $G$ has
	general conductance $\phi_{\varepsilon}(G)$ of at least $\phi$, then $\hat
	n^*$ satisfies $|n - \hat n^*|  \leq \frac{\varepsilon}{2-\varepsilon}n$
	w.p. at least $1-2^{-\ell}$.
\end{observation}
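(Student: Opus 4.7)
The plan is to reduce the claim to Theorem \ref{thm:upper} by a direct rescaling argument. Observe that the modified algorithm is identical to $\mathsf{EdgeSampling}$ except for a deterministic post-processing step: multiplying the output $|S_t|$ by the constant $c := 1 + \tfrac{\varepsilon}{2-\varepsilon} = \tfrac{2}{2-\varepsilon}$. Hence it suffices to translate the two-sided interval that $|S_t|$ lies in, as given by Theorem \ref{thm:upper}, into a two-sided error bound for $\hat n^* = c \cdot |S_t|$.

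From Theorem \ref{thm:upper}, the estimate $|S_t|$ deterministically satisfies $|S_t|\le n$, and with probability at least $1-2^{-\ell}$ it additionally satisfies $|S_t|\ge (1-\varepsilon)n$ (using the assumption $\phi_\varepsilon(G)\ge \phi$). Conditioning on the high-probability event, I will simply bound the two sides:
\begin{align*}
\hat n^* - n &= c\,|S_t| - n \le cn - n = \frac{\varepsilon}{2-\varepsilon}\,n,\\
n - \hat n^* &= n - c\,|S_t| \le n - c(1-\varepsilon)n = \left(1-\frac{2(1-\varepsilon)}{2-\varepsilon}\right)n = \frac{\varepsilon}{2-\varepsilon}\,n.
\end{align*}
Combining the two inequalities yields $|n-\hat n^*|\le \tfrac{\varepsilon}{2-\varepsilon}\,n$ on the event of probability at least $1-2^{-\ell}$.

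There is essentially no obstacle here; the only point to double-check is the algebraic identity $1 - c(1-\varepsilon) = \tfrac{\varepsilon}{2-\varepsilon}$, which is immediate from $c = \tfrac{2}{2-\varepsilon}$. The choice of the scaling constant $c$ is precisely the one that equalises the upward slack ($(c-1)n$) and the downward slack ($n - c(1-\varepsilon)n$), so that the symmetric two-sided error of $\tfrac{\varepsilon}{2-\varepsilon}n$ is the best attainable by any such affine rescaling of $|S_t|$ given only the interval guarantee of Theorem \ref{thm:upper}. Finally, the query complexity and the failure probability $2^{-\ell}$ are inherited unchanged from Theorem \ref{thm:upper}, since only the output line is modified.
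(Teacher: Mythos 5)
Your proof is correct and follows essentially the same route as the paper: both invoke \autoref{thm:upper} to get $(1-\varepsilon)n \le |S_t| \le n$ with probability at least $1-2^{-\ell}$ and then check that the rescaling by $1+\tfrac{\varepsilon}{2-\varepsilon}$ makes both one-sided errors equal to $\tfrac{\varepsilon}{2-\varepsilon}n$ (the paper phrases this as a maximum over the two endpoints, you write out the two inequalities separately, but the computation is identical).
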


\begin{proof}%[Proof of \autoref{obs:upper}]
%	To output an estimate $\hat n'$ satisfying $|n - \hat n'|  \leq \frac{\varepsilon}{2-\varepsilon}n$, the algorithm can output $|S_t|(1+\frac{\varepsilon}{2-\varepsilon})$: 
By \autoref{thm:upper} we have  $n\geq |S_t| \geq (1-\varepsilon)n$, where $t$ is the the index of the last execution of the ${\bf while}$ loop. 
We have that 
\begin{eqnarray*} 
\left|n -  \hat n^*\right|   &=& \left|n -  |S_t|\left(1+\frac{\varepsilon}{2-\varepsilon}\right)\right|  \\
				&\leq& \max \left\{  \left|n -  n\left(1+\frac{\varepsilon}{2-\varepsilon}\right)  \right|,  \left|n -  n(1-\varepsilon)\left(1+\frac{\varepsilon}{2-\varepsilon}\right) \right|   \right\}\\
				&=&  \frac{\varepsilon}{2-\varepsilon}n,
\end{eqnarray*}
which finishes the proof.								
\end{proof}

\paragraph{Lower bound in terms of the general conductance.} 

In the following we show that the bound of \autoref{obs:upper} is
almost tight.  Recall that, given $\phi_\varepsilon$, the modified version of
Algorithm $\mathsf{EdgeSampling}$ in \autoref{obs:upper} returns an
estimate with and additive error of at most
$\frac{\varepsilon}{2-\varepsilon}n$ using $O(n/\phi_\varepsilon)$ queries.  In
what follows we show that any algorithm, given the values $\phi_\varepsilon$
and $\varepsilon$, cannot output an estimate with an error smaller than
$\frac{\varepsilon-\delta}{2-\varepsilon-\delta}n$ unless it makes
$\Omega(n/\phi_\varepsilon)$ queries, for any $\delta<\varepsilon/2$. We prove
the following lemma for undirected graphs using $\uOracle$, but it should be
clear that the same the same result holds for directed graphs, by making the
graph directed, with symmetric edges, and using $\dOracleTwo$ (and hence also
for oracles $\dOracle$ and $\dOracleOne$). 
%The proof can be found in the appendix.

\begin{figure}[H]
\begin{center}
		\includegraphics[scale=0.7]{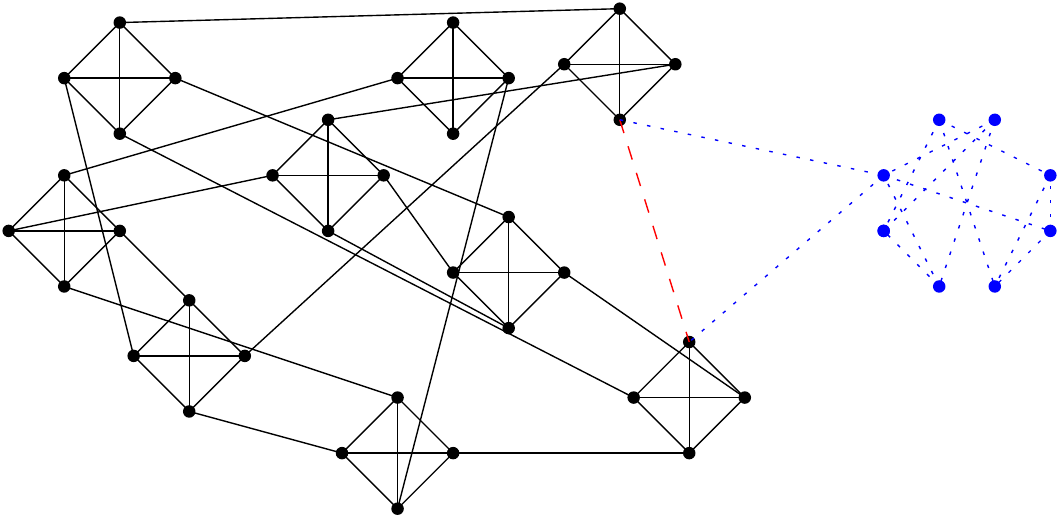}

\end{center}
\caption{
%The graphs of \autoref{lem:philower}.
The graphs of \autoref{thm:philower}.
	$G$ contains the black nodes and the black and red (dashed) edges which form a $\ceil{1/\phi}$-regular expander on cliques of size $\ceil{1/\phi}$.
	$G'$  is obtained by removing the red (dashed) edge and adding the blue (dotted) graph.
At least one blue edge needs to be sampled, which takes $\Omega(n/\phi)$ time,  in order to estimate $n$ accurately. 
 }
 \label{fig:HideIt}
\end{figure}

\begin{theorem}\label{thm:philower}
	Let $n\in \mathbb{N}$, $\phi\in [1/n, 1]$ and $\varepsilon \in (0, 1/2]$.
	There exists an undirected graph with general conductance
	$\phi_{\varepsilon}=\Theta(\phi)$	such that any algorithm with access to
	$\uOracle$ requires $\Omega( n/\phi_{\varepsilon})$ queries to output $\hat
	n$ such that $|n-\hat n| \leq
	\frac{\varepsilon-\delta}{2-\varepsilon-\delta} n$ w.p. at least $2/3$ for
	any $\delta<\varepsilon/2$. 
\end{theorem}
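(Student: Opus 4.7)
My plan is to prove the lower bound by exhibiting two graphs $G$ and $G'$ that are oracle-indistinguishable using $o(n/\phi)$ queries but whose sizes differ enough to force any algorithm achieving the claimed error to distinguish them. Let $N := \lceil 1/\phi \rceil$. I build the ``main'' component $M$ as an \emph{expander of cliques}: take a $3$-regular expander $\mathcal{H}$ on $k$ super-nodes, blow up each super-node into an $N$-clique, and for each super-edge place a matching of $N$ parallel edges between the two cliques. Choose $k$ so that $M$ has exactly $n' := (1-\varepsilon)n/(1-\delta)$ vertices. The graph $G$ is simply $M$; the graph $G'$ is obtained from $M$ by picking one edge $\{u,v\}$ (the ``red'' edge), removing it, and attaching a hidden subgraph $H$ on $n - n' = (\varepsilon-\delta)n/(1-\delta)$ vertices via bridge edges incident to $u$ and $v$, as in the figure. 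An elementary calculation using the identity $(1-\varepsilon)/(1-\delta) = (1-c)/(1+c)$ with $c = (\varepsilon-\delta)/(2-\varepsilon-\delta)$ shows that no single estimate $\hat n$ can satisfy $|n^{\star} - \hat n| \leq c \cdot n^{\star}$ simultaneously for $n^{\star} = n$ and $n^{\star} = n'$; hence any algorithm meeting the theorem's accuracy must distinguish $G$ from $G'$ with probability at least $2/3$.

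The first technical step is to verify that $\phi_\varepsilon(G') = \Theta(\phi)$. The choice of $n'$ makes $(1-\varepsilon)n = (1-\delta)n'$, so every $\varepsilon$-cut $S \subseteq V(G')$ whose trace on $V(M)$ has $|S \cap V(M)| \leq (1-\delta)n'$ inherits $\phi(S) = \Theta(\phi)$ from $M$: clique-aligned pieces of $S$ contribute $|\partial_M S| = \Omega(|S|)$ by the expansion of $\mathcal{H}$ together with the $N$-fold parallel edges, while $\deg_M(S) = \Theta(N|S|)$, giving ratio $\Theta(1/N) = \Theta(\phi)$; pieces that split a single clique contribute ratio $\Omega(1)$, which only helps. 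The cut $S = V(H)$ is the most delicate: its $\phi$-value is approximately (number of bridges) $/\deg_H(V(H))$, and the internal structure of $H$ must be calibrated so that this ratio is $\Theta(\phi)$ at the target $|V(H)|$; mixed cuts are then handled by combining the two analyses on their $V(M)$-side and $V(H)$-side pieces.

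For the query lower bound I invoke Yao's minimax principle, randomizing over the choice of the red edge---equivalently the pair $(u, i_{\mathrm{br}})$ with $u \in V(M)$ uniform and $i_{\mathrm{br}} \in [\deg_M(u)]$ uniform---together with adjacency-list permutations drawn as in the configuration-model coupling of \autoref{lem:gato}. In the coupled world the responses of $\uOracle(G)$ and $\uOracle(G')$ agree exactly until the first query $(l,i)$ with $l = \ell(u)$ and $i = i_{\mathrm{br}}$; until this point the algorithm's queries are deterministic functions of the common history. For any sensible deterministic algorithm making $q_w$ queries at node $w$ with total $Q = \sum_w q_w$, the distinguishing probability is at most $\sum_w \Pr{u = w} \cdot q_w / \deg(w) \leq Q / (n' N) = O(Q \phi / n)$; requiring this to be $\Omega(1)$ forces $Q = \Omega(n/\phi) = \Omega(n/\phi_\varepsilon)$, and Yao's principle lifts this bound to randomized algorithms.

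The principal obstacle I expect is the joint calibration of $H$ and the number of bridges: I need $|V(H)| = \Theta(\varepsilon n)$, $\phi_\varepsilon(G') = \Theta(\phi)$, and few enough bridges that the distinguishing lower bound still yields $\Omega(n/\phi)$, and these three demands pull against each other as $\phi$ varies across $[1/n, 1]$. Once a suitable $H$ is fixed, the remaining pieces---the size-ratio argument, the conductance bound for $M$, and the coupling-based indistinguishability following \autoref{lem:gato}---are essentially routine.
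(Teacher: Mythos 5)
Your proposal follows the paper's proof almost step for step: the main component is an expander of $\Theta(1/\phi)$-cliques (the paper uses a $d$-regular expander with one inter-clique edge per expander edge rather than your $3$-regular expander with $N$ parallel edges, but this is cosmetic), the second graph hides a $3$-regular expander on $\Theta(\varepsilon n)$ vertices behind $O(1)$ bridge edges substituted for a single random intra-clique edge, the size-ratio computation forces any accurate algorithm to distinguish the two graphs, and the coupling argument charges $\Omega(n/\phi)$ queries to locate the special slot among the $\Theta(n/\phi)$ intra-clique edge slots. Two remarks. First, your calibration $n'=(1-\varepsilon)n/(1-\delta)$ makes the identity $n-n'=c(n+n')$ hold with equality, so the two accuracy intervals touch at a single point and an algorithm that always outputs that point satisfies both; you need strict slack, which the paper obtains by instead setting $n_1=n(1-\varepsilon+\delta/2)$.

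Second, the tension you flag around the cut $S=V(H)$ is genuine, and you are right to be suspicious: it cannot be resolved by recalibrating $H$. Any connected $H$ on $\Theta(\varepsilon n)$ vertices has $\deg(V(H))=\Omega(\varepsilon n)$, so with $B=O(1)$ bridges one gets $\phi(V(H))=O(1/(\varepsilon n))$, which is $\ll\phi$ whenever $\phi\gg 1/(\varepsilon n)$; while inflating the number of bridges to $B=\Theta(\phi\varepsilon n)$ to restore $\phi(V(H))=\Theta(\phi)$ lets the algorithm find a bridge after only $O((n/\phi)/B)=O(1/\phi^2)$ queries, destroying the $\Omega(n/\phi)$ bound. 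The paper does not escape this either: it uses exactly two bridges, and its verification that $\phi_\varepsilon(G')=\Theta(\phi)$ only examines sets built from (parts of) cliques and never the legal set $S=V(G_2)$, whose conductance is $2/(3|V(G_2)|+2)=\Theta(1/(\varepsilon n))$. So the claim that \emph{both} graphs have general conductance $\Theta(\phi)$ holds only for $\phi=\Theta(1/(\varepsilon n))$; for larger $\phi$ the theorem must be read as a promise on the first graph alone, with the indistinguishable alternative falling outside the promised class. In short, your plan reproduces the paper's argument, and the one step you could not close is the one the paper leaves open as well.
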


\begin{proof}%[Proof of \autoref{thm:philower}]
	Let $d=\ceil{1/\phi}$.    For simplicity we assume that $d$ divides $n$.
%		Fix a small enough constant $\varepsilon'>0$. 
	Let $G_1=(V_1,E_1)$ be a $d$-regular expander on $n^*=n(1-\varepsilon+\delta/2)/d$ nodes.
	We construct $G=(V,E)$ (see Figure~\ref{fig:HideIt} for an illustration) by replacing every node $r$ by a clique $K$ of size $d$ such that every node of the clique has exactly one of the edges incident to $r$. Let $n_1 := |V|=n(1-\varepsilon+\delta/2)$.	
%	Recall that in the directed setting, the edges $(w,z)$ and $(z,w)$ are in the set of edges of the graph for every $w,z$ vertices. 
%	
	Let $G_2$ be a $3$-regular expander on $n-n_1$ nodes.
	 Choose an edge $(u,v)$ uniformly at random among all edges where both nodes $u$ and $v$ belong to the same clique.
	 We construct $G'$ by removing $(u,v)$, and by adding the edge $(u,w),(v,w)$, where  $w$ is an arbitrary node of $G_2$.
	 
	 The proof idea is to couple the oracle decisions such that
	 w.p. $2/3$ no algorithm can distinguish between $G$ and $G'$. Incidentally,  our construction ensures that 	  there is no estimate $\hat n$ satisfying $|\hat n - n_1| \leq\frac{\varepsilon-\delta}{2-\varepsilon-\delta}n_1$ and $|\hat n - n| \leq \frac{\varepsilon-\delta}{2-\varepsilon-\delta} n$ simultaneously. 
	 Suppose for the sake of contradiction this was possible, then 
	 $\hat n - n(1-\varepsilon+\delta/2)\leq \frac{\varepsilon-\delta}{2-\varepsilon-\delta}n_1  $
	 and $n - \hat n \leq \frac{\varepsilon-\delta}{2-\varepsilon-\delta}   $.
	 Adding up both equations yields
	 $n-n(1-\varepsilon+\delta/2)\leq \frac{\varepsilon-\delta}{2-\varepsilon-\delta}(n_1+n)$. This 
	 implies $\varepsilon-\delta/2 \leq \frac{\varepsilon-\delta}{2-\varepsilon-\delta} (2-\varepsilon+\delta/2)$,	  which can be shown to a contradiction for $\varepsilon \leq 1/2$. 
	 %\url{http://www.wolframalpha.com/input/?i=x-y%2F2+%5Cleq+(x-y)%2F(2-x-y)(2-x%2By%2F2)}
	 Thus, any algorithm approximating $n$ up to an error
	 of $\frac{\varepsilon-\delta}{2-\varepsilon-\delta}$
	   needs to distinguish between $G$ and $G'$.

	 	We assume that the oracles $\uOracle(G)$ and $\uOracle(G')$ use a consecutive labelling function, \ie
	whenever it chooses a new node, it picks the smallest as yet unused natural
	number as the label. In response to the $\initquery$ query, both oracles pick $v_1 \in V_1\setminus \{u,v \}$, set $\ell(v_1) = 1$ and return $(\ell(v_1), \deg_{G}(v_1))$.
	As long as the algorithm doesn't disclose the edge $(u,w)$ or $(v,w)$ in $G'$ and $(u,v)$ in $G$, 
	we can couple the nodes returned by the oracle and thus
	the two graphs are indistinguishable.
	Hence, similarly as before,  we couple with constant probability for the first $cn/\phi$ samples returned by the oracle for constant $c>0$ small enough.  
     It remains to show that  $G'$ has general conductance $\phi_{\varepsilon}=\Theta(\phi)$.
     %We detail the proof for graph $G'$ since the proof for $G$ is analogous.
     
     Consider an arbitrary clique $K$ in the construction. Clearly, $\phi_{\varepsilon} \leq \phi(V(K))=O(d/d^2)=O(\phi)$. To derive a lower bound on $\phi_{\varepsilon}$ consider any set $S$ such that $\phi=\phi(S)$.
Suppose that $S$ contains only entire cliques, \ie if $u\in S$, then also $v\in S$ for $u,v$ belonging to the same clique.
Then, the general conductance of $S$ is $\phi(S)=\Theta(\phi^{G_1}/d)=\Theta(\phi)$, where $1/\phi^{G_1}= \Theta(1)$ is the general conductance of the $G_1$.

We claim that there exists a set of cliques, $\mathcal{K}$, containing a  constant fraction of the cliques and no node of $S$ is in any clique of $\mathcal{K}$. If this were not the case, then $S$ either violates $|S|\leq n(1-\varepsilon)$ or  $S$ violates $\phi=\phi(S)$. 

Now suppose that $S$ does not contain only entire cliques. Observe that for any clique $K$ which is not entirely included in $S$, we can include the remaining nodes which only decreases the general conductance.
The new resulting set $S'$ might be larger than $n(1-\varepsilon)$. However, observe that $\phi_{\varepsilon}\geq \phi_{\varepsilon-\varepsilon'}$ for any sufficiently small $\delta>\varepsilon'>0$.
Note that $S'$ does not contain any nodes of cliques in $\mathcal{K}$. Hence, the resulting general conductance fulfils $ \phi(S) \geq \phi(S') \geq \Omega(\phi^{G_1}/d)=\Omega(\phi)$.
 This completes the proof.								\end{proof}

%NOTE TO MYSELF: Both terms can be bounded to (\delta(S)+d)/(|E(S)|+d^2) 

%
%\section*{Acknowledgements.}
%We would like to thank Claire Mathieu and Thomas Sauerwald for helpful discussions and important
%hints.
\bibliographystyle{plain}
\bibliography{biblio}
\appendix

\section{Auxiliary Claims}
\begin{theorem}[\cite{LPW06}]\label{lem:LPW06} 
Consider two irreducible Markov chains $(X_t)_{t\geq 0}$, $(Y_t)_{t\geq 0}$ with transition matrix $P$, $X_0=x$, and $Y_0=y$.
	Let $\{(X_t, Y_t) \}$ be a coupling satisfying that if
	$X_s= Y_s	$, then $X_t=Y_t$ for $t\geq s$.
	Let $T= \min \{ t\geq 0 ~|~ X_t=Y_t\}.$ 
	Then $\tvdist{p^t(x,\cdot)-p^t(y,\cdot)}\leq \Pr{T > t}$.
\end{theorem}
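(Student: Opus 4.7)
The plan is to prove this by the standard coupling argument: use the fact that once the two chains have met they stay together, so any discrepancy between their distributions at time $t$ must be absorbed into the event $\{T>t\}$.

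First I would recall the characterisation of total variation distance as $\tvdist{\mu-\nu}=\sup_{A\subseteq\Omega}|\mu(A)-\nu(A)|$. So it suffices to show that for every event $A\subseteq V$,
\[
\bigl|\Pr{X_t\in A}-\Pr{Y_t\in A}\bigr|\le \Pr{T>t},
\]
since then taking the supremum over $A$ gives the desired bound on $\tvdist{p^t(x,\cdot)-p^t(y,\cdot)}$.

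Next I would split each probability according to whether the chains have already coupled. Writing
\[
\Pr{X_t\in A}=\Pr{X_t\in A,\,T\le t}+\Pr{X_t\in A,\,T>t},
\]
and similarly for $Y_t$, I would use the defining property of the coupling: on the event $\{T\le t\}$ we have $X_s=Y_s$ for all $s\ge T$, in particular $X_t=Y_t$. Therefore $\Pr{X_t\in A,\,T\le t}=\Pr{Y_t\in A,\,T\le t}$, and these terms cancel when we take the difference. What remains is
\[
\bigl|\Pr{X_t\in A}-\Pr{Y_t\in A}\bigr|=\bigl|\Pr{X_t\in A,\,T>t}-\Pr{Y_t\in A,\,T>t}\bigr|.
\]

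Finally I would bound each term on the right by $\Pr{T>t}$ and note that the absolute value of the difference of two quantities each in $[0,\Pr{T>t}]$ is at most $\Pr{T>t}$. Taking the supremum over $A$ then yields the theorem. There is no real obstacle here; the only subtle point is correctly invoking the coalescence property of the coupling to cancel the $\{T\le t\}$ contributions, and this is where the hypothesis that the coupling is sticky (if $X_s=Y_s$ then $X_t=Y_t$ for all $t\ge s$) is essential.
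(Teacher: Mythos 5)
Your proof is correct: the splitting on $\{T\le t\}$ versus $\{T>t\}$, the use of the sticky coupling property to cancel the coupled contributions, and the bound on the remaining terms by $\Pr{T>t}$ constitute exactly the standard coupling argument. The paper itself offers no proof of this statement (it is quoted directly from Levin, Peres and Wilmer), and your argument coincides with the classical proof given in that reference, so there is nothing to add.
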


\begin{proposition}{\cite[Chapter XIV.2]{f68}}]\label{pro:CaminataAleatoriaParcial}
%[Gambler's ruin]\cite{XXX}
Let $p\in (0,1/2)$ and $b,s\in\naturals$. Consider a discrete time Markov chain $(Z_t)_{t\geq 0}$
 with state space $\Omega=[0,b]$ where
\begin{itemize}
	\item $Z_0=s \in [0,b]$
 	\item  $\Pr{Z_t=i ~|~ Z_{t-1}=i-1}=p$ for $i\in [1,b-1] , t\geq 1$
	\item $\Pr{Z_t=i ~|~ Z_{t-1}=i+1}=1-p$ for $i\in [1,b-1] , t\geq 1$
	\item $\Pr{Z_t=i ~|~ Z_{t-1}=i}=1$ for $i\in \{0,b \}, t\geq 1$ 
\end{itemize}
	Let $T= \min \{ t\geq 0 ~|~ Z_t \in \{ 0,b \}\}$. Then
	\[ 
	\Pr{Z_T=b}=\frac{\left(\frac{1-p}{p}\right)^{s} - 1  }{ \left(\frac{1-p}{p}\right)^{b} - 1}.
	\] 
%	Moreover, $\E{T}=\frac{s}{1-2p}$.
%\fnote{I don't need the Moreover part, but it might be nice to have because people might reference this proposition! I could have used it earlier and I'm almost certain I'll reference it at some point. }	
\end{proposition}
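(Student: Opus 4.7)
The plan is to prove this as the classical gambler's ruin formula by setting up the hitting probability as a harmonic function and solving a linear recurrence.

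First I would define $h(i) := \Pr[Z_T = b \mid Z_0 = i]$ for every $i \in [0, b]$. Since the states $0$ and $b$ are absorbing, we immediately have the boundary conditions $h(0) = 0$ and $h(b) = 1$. For any interior state $i \in [1, b-1]$, I would apply the Markov property and the transition probabilities to obtain the recurrence
\[
h(i) = p \cdot h(i+1) + (1-p) \cdot h(i-1).
\]
This step uses that conditioned on $Z_1 = j$ the chain restarted at $j$ hits $\{0,b\}$ with the same distribution.

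Next I would solve this linear second-order recurrence. Rewriting it as $p(h(i+1) - h(i)) = (1-p)(h(i) - h(i-1))$ shows that the successive differences $\Delta_i := h(i+1) - h(i)$ satisfy $\Delta_i = r \Delta_{i-1}$ with $r := (1-p)/p$. Since $p < 1/2$ we have $r > 1$, so the general solution is $h(i) = A + B r^i$ for constants $A, B$. Imposing $h(0) = 0$ gives $A = -B$, and imposing $h(b) = 1$ gives $B(r^b - 1) = 1$. Substituting back yields
\[
h(i) = \frac{r^i - 1}{r^b - 1} = \frac{\left(\frac{1-p}{p}\right)^i - 1}{\left(\frac{1-p}{p}\right)^b - 1},
\]
and evaluating at $i = s$ gives the claimed formula.

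There is no real obstacle here: the only two small things to verify are (i) that $h$ is well-defined, i.e., that $T < \infty$ almost surely, which follows because from any interior state the chain reaches $\{0, b\}$ within $b$ steps with probability at least $\min(p, 1-p)^b > 0$, so $T$ is stochastically dominated by a geometric random variable; and (ii) that the denominator $r^b - 1$ is nonzero, which holds since $p \neq 1/2$ implies $r \neq 1$. Both are routine and can be stated in one line each.
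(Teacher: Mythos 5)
Your proof is correct; the paper itself gives no argument for this proposition, citing it directly from Feller, and your first-step-analysis/difference-equation derivation is precisely the classical proof found in that reference (boundary conditions $h(0)=0$, $h(b)=1$, geometric differences with ratio $(1-p)/p$). The two side remarks you flag --- finiteness of $T$ and the nonvanishing denominator since $p\neq 1/2$ --- are indeed the only details needed, so nothing is missing.
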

%\begin{proof}
%	Let $q_k$ be the probability that $Z_T=b$ when $Z_0=k$ with $k\in [0,b]$.
%	We have $q_b=1$ and $q_0=0.$
%	Note that $q_k=p\cdot q_{k+1} + (1-p)\cdot q_{k-1}$ for $k\in (0,b).$
%	We claim that the 
%	\[
%	q_k=\frac{\left(\frac{1-p}{p}\right)^{k} - 1  }{ \left(\frac{1-p}{p}\right)^{b} - 1}
%	\]
%	 for $ k\in [0,b]$ is the solution to this system of linear equations:
%	
%	\[
%	q_k=p\cdot q_{k+1} + (1-p)\cdot q_{k-1} = p \frac{\left(\frac{1-p}{p}\right)^{k+1} - 1  }{ \left(\frac{1-p}{p}\right)^{b} - 1} +(1-p)\frac{\left(\frac{1-p}{p}\right)^{k-1} - 1  }{ \left(\frac{1-p}{p}\right)^{b} - 1}=\frac{\left(\frac{1-p}{p}\right)^{k} - 1  }{ \left(\frac{1-p}{p}\right)^{b} - 1}.
%	\]
%	Setting $k=s$ concludes the proof.
%\end{proof}

%%%%%%%%%

%%%%%%%%
%\section{Omitted Proofs of \autoref{sec:undirected}}
%%%%%%%%%%

%%%%%%%%%%

%%%%%%%

%\section{Omitted Figures of \autoref{sec:directed}}

%\section{Omitted Proofs of \autoref{sec:directed}}

%%%%%%%%%%%%%%%%%%%%

%%%%%%%%%%%%%%%%%%%%

%%%%%%%%%%%%%%%%%%%%%%%%%

%%%%%%%%%%%%%%%%%%%%%%%%%

%%%%%%%%%%%%%

\end{document}